\theoremstyle{plain}
\newtheorem{theorem}{Theorem}
\newtheorem{definition}[theorem]{Definition}
\newtheorem{lemma}[theorem]{Lemma}
\newtheorem{example}[theorem]{Example}
\newtheorem{corollary}[theorem]{Corollary}
\newcommand{\pn}[1]{\textsc{#1}}
\newcommand{\problem}[3]{
\vspace{\topsep}
\noindent\fbox{\begin{minipage}{.8\textwidth}
\pn{#1}\\
 \textbf{Input:} #2\\
\textbf{Problem:} #3
\end{minipage}
}
\vspace{\topsep}
}
\newcommand{\myindex}[2][]{%
\ifthenelse{\equal{#1}{}}{\index{#2}#2}{\index{#2}#1}}
\newcommand{\eindex}[1]{\emph{\myindex{#1}}}
\newcommand{\rel}[1]{\mathbf{#1}}
\newcommand{\relA}{\rel{A}}
\newcommand{\relB}{\rel{B}}
\newcommand{\relD}{\rel{D}}
\newcommand{\calC}{\mathcal{C}}
\newcommand{\calG}{\mathcal{G}}
\newcommand{\calH}{\mathcal{H}}
\newcommand{\calN}{\mathcal{N}}
\newcommand{\calQ}{\mathcal{Q}}
\newcommand{\id}{\mathrm{id}}
\newcommand{\var}[1]{\textsf{var}({#1})}
\newcommand{\dom}[1]{\textsf{dom}({#1})}
\newcommand{\aug}[1]{\textsf{aug}({#1})}
\newcommand{\smallp}{\ensuremath{p\textup{-}}}
\newcommand{\CQ}{\mathrm{CQ}}
\newcommand{\sCQ}{\mathrm{\#CQ}}
\newcommand{\sCQh}{\mathrm{\#CQ}_{hyp}}
\newcommand{\clique}{\pn{Clique}}
\newcommand{\sclique}{\pn{\#Clique}}
\newcommand{\pclique}{\smallp\clique}
\newcommand{\psclique}{\smallp\sclique}
\newcommand{\contract}{\mathsf{contract}}
\theoremstyle{plain}
\newtheorem{observation}[theorem]{Observation}
\newtheorem{proposition}[theorem]{Proposition}
\newtheorem{claim}{Claim}
\newcommand{\hatrelA}{\hat{\relA}}
 \newcommand{\hatrelB}{\hat{\relB}}
 \newcommand{\hatcalC}{\hat{\calC}}
 \newcommand{\hatR}{\hat{R}}
 \newcommand{\hatA}{\hat{A}}
 \newcommand{\hatB}{\hat{B}}
\newcommand{\FPT}{\mathsf{FPT}}
\newcommand{\sW}[1]{\mathsf{\#W[#1]}}
\newcommand{\W}[1]{\mathsf{W[#1]}}
\newcommand{\NP}{\mathsf{NP}}
\newcommand{\p}{\mathsf{FP}}
\newcommand{\comment}[2]{}
\newcommand{\hubie}[1]{\comment{\color{red}Hubie}{\color{red}#1}}
\newcommand{\stefan}[1]{\comment{\color{red}Stefan}{\color{red}#1}}
\newcommand{\N}{\mathbb{N}}
\renewcommand{\dom}{\mathsf{dom}}
\newcommand{\powfin}{\wp_{\mathsf{fin}}}
\newcommand{\lpr}{\langle}
\newcommand{\rpr}{\rangle}
\newcommand{\prom}[1]{\mathsf{prom}\textup{-}#1}
\newcommand{\param}[1]{\mathsf{param}\textup{-}#1}
\newcommand{\paramprom}[1]{\param{\prom{#1}}}
\title{A Trichotomy in the Complexity of Counting Answers to Conjunctive Queries}
\author{
Hubie Chen \footnote{Universidad del País Vasco, E-20018 San Sebastián, Spain,
\emph{and} IKERBASQUE, Basque Foundation for Science, E-48011 Bilbao, Spain}
 \and
Stefan Mengel\footnote{LIX UMR 7161,
\'{E}cole Polytechnique, Universit\'e Paris Saclay,
France, partially supported by a Qualcomm grant administered by \'Ecole Polytechnique. }} 
 \author[1]{Hubie Chen}
\author[2]{Stefan Mengel\footnote{Partially supported by a Qualcomm grant administered by \'Ecole Polytechnique. }}
\authorrunning{Hubie Chen and Stefan Mengel}
\affil[1]{
Universidad del País Vasco,
E-20018 San Sebastián,
Spain,
\emph{and} IKERBASQUE, Basque Foundation for Science,
E-48011 Bilbao,
Spain
}
\affil[2]{LIX UMR 7161,
\'{E}cole Polytechnique,
France}
\keywords{database theory, query answering, conjunctive queries, counting complexity}
\subjclass{H.2.4 [Database Management]: Languages|Query languages; 
F.1.3 [Computation by Abstract Devices]: Complexity measures and classes; 
G.2.2 [Discrete Mathematics]: Graph Theory|Hypergraphs}
\begin{document}

\maketitle

\begin{abstract}
Conjunctive queries are basic and heavily studied database queries;
in relational algebra, they are the select-project-join queries.
In this article, we study the fundamental problem of counting,
given a conjunctive query and a relational database,
the number of answers to the query on the database.
In particular, we study the complexity of this problem relative
to sets of conjunctive queries.
We present a trichotomy theorem, which shows essentially that
this problem on a set of conjunctive queries is either
tractable, equivalent to the parameterized CLIQUE problem, or
as hard as the parameterized counting CLIQUE problem;
the criteria describing which of these situations occurs
is simply stated, in terms of graph-theoretic conditions.
\end{abstract}

\section{Introduction}

Conjunctive queries are the most basic and most heavily studied database
queries.
They can be formalized logically as formulas consisting of a
sequence of existentially quantified variables, followed by a
conjunction of atomic formulas; in relational algebra, they are
the \emph{select-project-join} queries (see e.g.~\cite{AbiteboulHullVianu95-foundationsdatabases}).
Ever since the landmark 1977 article of 
Chandra and Merlin~\cite{ChandraM1977},
complexity-theoretic aspects of conjunctive queries have been a
research subject of persistent and enduring interest which
continues to the present day (as a sampling, we point to the works~\cite{koalititsvardicontainment,PapadimitriouYannakakis99-database,GottlobLeoneScarcello02-hypertreedecomposisions,Grohe07,SchweikardtSchwentickSegoufin09-querylanguages,Marx13-tractablehypergraph,PichlerS13,ChenMueller14};
see the discussions and references therein for more information).
The problem of evaluating a Boolean (closed) conjunctive query 
on a relational database
is equivalent to a number of well-known problems, including 
conjunctive query containment, the homomorphism problem
on relational structures, and the 
constraint satisfaction problem~\cite{ChandraM1977,koalititsvardicontainment}.
That this evaluation problem appears in many equivalent guises
attests to the foundational and primal nature of this problem, and
it has correspondingly been approached and studied
from a wide variety of perspectives and motivations.

In this article, we study the fundamental problem of 
counting, given a conjunctive query and a relational database,
the number of query answers, that is, the number of
assignments that make the query true with respect to the database;
we denote this problem by $\sCQ$. 
In addition to being a natural problem in its own right,
let us remark that
all practical query languages supported by database management systems
have a counting operator.
We study the complexity of $\sCQ$
relative to sets of conjunctive queries, that is,
we study a problem family: each set of conjunctive queries
gives rise to a restricted version of the general problem.
Our objective is to determine on which sets of conjunctive queries
$\sCQ$ is tractable, and more broadly,
to understand the complexity behavior of the
problem family at hand.
Throughout, we assume that each considered set of conjunctive queries
is of \emph{bounded arity}, by which we mean that
there is a constant bounding the arity of all relation symbols in 
all queries in the set.\footnote{  
It is known that when no such bound on the
arity
is assumed, the complexity of query evaluation can be highly sensitive
to the representation of database relations~\cite{chengrohe}.
In contrast, natural representations are equivalent under
polynomial-time
translation;
thus, the study of bounded arity queries can be viewed as
the investigation of the representation-independent case.
}

Surprisingly, despite the natural and basic character of the 
counting problem $\sCQ$, the project of understanding
its complexity behavior over varying sets of queries 
has been carried out in previous work
for only two particular types of 
conjunctive queries.
In the case of Boolean conjunctive queries, 
the problem $\sCQ$ specializes to the problem of deciding
whether or not such a query evaluates to 
\emph{true} or \emph{false} on a database.
A classification of sets of Boolean conjunctive queries
was given by Grohe~\cite{Grohe07},
showing essentially that this decision problem is
either polynomial-time tractable, or is hard under a typical
complexity-theoretic assumption from parameterized complexity,
namely, that the parameterized $\clique$ problem 
is not fixed-parameter
tractable (see Theorem~\ref{thm:grohe2}).\footnote{
This is equivalent to the assumption that the parameterized
complexity class W[1] is not contained in the parameterized
complexity class FPT, which is the phrasing that Grohe employs.
}
It is well-known that a conjunctive query can be naturally
mapped to a relational structure 
(see Definition~\ref{def:natural-model}); 
the tractable sets of Grohe's classification are such queries 
whose corresponding structures have 
\emph{cores} of \emph{bounded
  treewidth}.
The \emph{core} of a structure $\relA$ is, intuitively, the smallest structure that
is (in a certain sense) equivalent to the structure $\relA$,
and \emph{bounded treewidth} is a graph-theoretical condition 
that can intuitively be taken as a notion of tree similitude.
Following Grohe's work, Dalmau and Jonsson~\cite{DalmauJ04}
studied the case of quantifier-free conjunctive queries
(which they phrase as the problem of counting homomorphisms
between a given pair of relational structures).
They proved that bounded treewidth is the property that
determines polynomial-time tractability for this case; 
in contrast to Grohe's theorem,
the statement of their classification 
(Theorem~\ref{thm:dalmauJ})
does not refer to the notion of
core, and is proved under the assumption that the 
\emph{counting} version 
$\sclique$
of the parameterized $\clique$ problem
is not fixed-parameter tractable.

In this article, we present a trichotomy theorem describing the
complexity of the counting problem $\sCQ$ on 
each possible set of conjunctive queries.
Our trichotomy theorem unifies, generalizes, and directly implies
the two discussed prior classifications.
This trichotomy yields 
 that counting on a set of conjunctive queries
is polynomial-time tractable, is interreducible with the
$\clique$ problem, or admits a reduction from
and is thus as hard as
the counting problem $\sclique$.
In order to prove and state our trichotomy theorem,
we work with a notion of \emph{core} of a 
(structure associated with a) conjunctive query
whose definition crucially takes into account which variables
of the conjunctive query are free
(Definition~\ref{def:augmented}). 
We also use a notion of hypergraph of a conjunctive query
whose vertices are the free variables of the query
(Definition~\ref{def:contract}).
The properties of a query set that determine which case of our trichotomy 
theorem applies are whether or not the cores have bounded treewidth,
and whether or not the just-mentioned hypergraphs have bounded
treewidth;
these two conditions correspond, respectively, to the conditions
that describe the dichotomies for the Boolean case and the
quantifier-free case.
The proof of our trichotomy draws on recent work of
Durand and Mengel~\cite{DurandM14},
who presented a classification for the problem $\sCQ$
based on hypergraphs (see Theorem~\ref{thm:bounded}).

Note that it is readily verified
that the classes of queries for  which we show that $\sCQ$ is
tractable are equivalent to those described in recent work
by Greco and Scarcello~\cite{GrecoS14}.
In contrast to their work,
 we show that the promise version of $\sCQ$ is not only
 fixed-parameter tractable on these classes of queries but can even be
 solved in polynomial time. 
Moreover, and more importantly, we show that these classes are the \emph{only} classes of queries for which $\sCQ$ can be solved efficiently because all other classes of queries are intractable under standard complexity assumptions.

In order to prove and present our trichotomy, we
introduce a version of the \emph{case complexity} framework~\cite{Chen14}
which is suitable for dealing with counting problems.
Among other features, this framework facilitates the presentation
of reductions between parameterized problems which are restricted
in terms of the permitted parameters (or \emph{slices}); 
this is the type of restriction we
deal with here, as the parameter of an instance of $\sCQ$ is
taken to be the query, and we consider $\sCQ$ with respect
to various query sets.
This framework also allows the straightforward derivation
of complexity consequences as a function of the computability
assumption placed on the query sets; witness the derivation
of Theorems~\ref{thm:trichotomy-re} and~\ref{thm:trichotomy-comp} from
Theorem~\ref{thm:trichotomy}.


\section{Preliminaries}

For an integer $i \geq 1$, we define  $\pi_i$ to be
the operator that, given a tuple, returns the value in the $i$th coordinate.

We assume that the reader is familiar with basic graph theoretic notions. In particular, we will use some very basic properties of treewidth, which can be found, for example, in \cite[Chapter 11]{FlumG06} or in \cite[Section 2.3]{thesis}.

\label{sct:homomorphisms}


\subsection{Structures, homomorphisms and cores}

A relational vocabulary is defined to be a set of relation symbols $\tau:=\{R_1, R_2, \ldots, R_\ell\}$ where each $R_i$ has an arity $r_i$. A relational structure $\relA$ over $\tau$ is a tuple $(A, R_1^\relA, \ldots, R_\ell^\relA)$ where $A$ is a set called the \emph{domain}
 of $\relA$ and $R_i^\relA \subseteq A^{r_i}$ is a relation of arity $r_i$. 
All vocabularies and structures in this article are assumed to be relational.
We assume each structure in this article to be \emph{finite}
in that it has a finite domain.
We denote structures by the bold letters, $\relA, \relB, \ldots$,
and their corresponding domains by $A, B, \ldots$.

We assume each class of structures in this article to be of bounded arity, 
that is, for each such class we assume there exists
a constant $c \geq 1$ such that the arity of each relation of a structure in the class
 is at most $c$.  
Since in the bounded arity setting the sizes of all reasonable encodings of a structure are polynomially related, we do not fix a specific encoding  but assume that all structures are encoded in any reasonable way.

\begin{definition}
Let $\relA$ and $\relB$ be two structures over the same vocabulary~$\tau$. A \emph{\myindex{homomorphism}} from $\relA$ to $\relB$ is a function $h:A\rightarrow B$ such that for each relation symbol $R\in \tau$ and each $t=(t_1, \ldots ,t_\ell) \in R^\relA$ we have $(h(t_1), \ldots, h(t_\ell))\in R^\relB$.
A homomorphism $h$ from $\relA$ to $\relB$ is called an \eindex{isomorphism} if $h$ is bijective and $h^{-1}$ is a homomorphism from $\relB$ to $\relA$; when such an isomorphism exists, 
we say that $\relA$ and $\relB$ are \eindex{isomorphic}.
An isomorphism from a structure to itself is called an \eindex{automorphism}.
\end{definition}




\begin{definition}
 Two structures $\relA$ and $\relB$ are \emph{\myindex{homomorphically equivalent}} if there are homomorphisms from $\relA$ to $\relB$ and from $\relB$ to $\relA$.

 A structure is a \emph{\myindex{core}} if it is not homomorphically equivalent to a proper substructure of itself.
 A structure $\relB$ is a \emph{\myindex[core of]{core of a structure}} a structure $\relA$ if
$\relB$ is a substructure of $\relA$,
$\relB$ is homomorphically equivalent to $\relA$, and
$\relB$ is a core.
\end{definition}

We state two basic properties of cores of structures; due to the first, 
we will speak of \emph{the} core of a structure instead of \emph{a} core.
The second seems to be folklore; 
 a proof can be found, for example, in \cite{FlumG06}.
\begin{lemma}\label{lem:coreunique}
 Every structure $\relA$ has at least one core. Furthermore, every two cores $\relB_1$ and $\relB_2$ of $\relA$ are isomorphic.
\end{lemma}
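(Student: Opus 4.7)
For existence, the plan is to argue by iteration using finiteness. Starting from $\relA$, if $\relA$ is already a core there is nothing to do; otherwise, by definition, $\relA$ is homomorphically equivalent to some proper substructure $\relA_1 \subsetneq \relA$. Repeat the argument with $\relA_1$ in place of $\relA$. Each step strictly decreases the size of the domain, so since $\relA$ is finite the process terminates at a substructure that is homomorphically equivalent to $\relA$ (by transitivity of homomorphic equivalence) and that has no proper substructure to which it is homomorphically equivalent, i.e., a core of $\relA$.

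For uniqueness up to isomorphism, let $\relB_1$ and $\relB_2$ both be cores of $\relA$. Then they are homomorphically equivalent to each other, so fix homomorphisms $f : \relB_1 \to \relB_2$ and $g : \relB_2 \to \relB_1$. The composition $g \circ f$ is an endomorphism of $\relB_1$, and its image induces a substructure $\relB_1'$ of $\relB_1$ which is homomorphically equivalent to $\relB_1$ (via $g \circ f$ in one direction and inclusion in the other). Since $\relB_1$ is a core, we must have $\relB_1' = \relB_1$, so $g \circ f$ is surjective, and hence bijective because $B_1$ is finite. Symmetrically, $f \circ g$ is a bijective endomorphism of $\relB_2$. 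From these two facts it is immediate that $f$ and $g$ are themselves bijections.

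The crux is now to upgrade $f$ from a bijective homomorphism to an isomorphism, i.e., to check that $f^{-1}$ is also a homomorphism. The key observation is that a bijective endomorphism $h$ of a finite structure is automatically an automorphism: for each relation symbol $R$, the map $h$ sends the tuple set $R^{\relB}$ injectively into itself, and since $R^{\relB}$ is finite, $h(R^{\relB}) = R^{\relB}$, so $h^{-1}$ preserves $R$ as well. Applying this to the bijective endomorphism $f \circ g$ of $\relB_2$, the inverse $(f \circ g)^{-1}$ is a homomorphism from $\relB_2$ to $\relB_2$. Hence the composition $g \circ (f \circ g)^{-1} : \relB_2 \to \relB_1$ is a homomorphism, and a direct check shows it is a two-sided inverse to $f$. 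This proves $f$ is an isomorphism, establishing $\relB_1 \cong \relB_2$.

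The only step that requires genuine care is the bijective-endomorphism-is-automorphism observation; everything else is bookkeeping. The finiteness assumption on the structures enters essentially at two points, namely termination of the iteration in the existence proof and the counting argument that forces $h^{-1}$ to preserve each relation in the uniqueness proof.
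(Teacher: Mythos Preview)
Your argument is correct and is essentially the standard proof of this fact. Note, however, that the paper does not actually prove Lemma~\ref{lem:coreunique}: it is stated as one of ``two basic properties of cores of structures'' and left without proof (the paper only gives a reference for the companion Lemma~\ref{lem:ismorphicCores}). So there is no paper proof to compare against; your write-up simply supplies the missing details. The existence argument by descent on domain size and the uniqueness argument via the observation that a bijective endomorphism of a finite relational structure is an automorphism are exactly the usual route, and every step checks out.
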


\begin{lemma}\label{lem:ismorphicCores}
 Let $\relA$ and $\relB$ be two homomorphically equivalent structures, and let $\relA'$ and $\relB'$ be cores of $\relA$ and $\relB$, respectively. Then $\relA'$ and $\relB'$ are isomorphic.
\end{lemma}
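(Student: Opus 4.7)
The plan is to exploit the core property to promote the given web of homomorphisms into an honest isomorphism between $\relA'$ and $\relB'$. As a first step, I would note that since $\relA' \to \relA$ and $\relA \to \relA'$ exist (as $\relA'$ is a core of $\relA$), and likewise for $\relB',\relB$, composing with the homomorphisms between $\relA$ and $\relB$ yields homomorphisms $f \colon \relA' \to \relB'$ and $g \colon \relB' \to \relA'$. So $\relA'$ and $\relB'$ are themselves homomorphically equivalent, and the task reduces to upgrading $f$ to an isomorphism.

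The crux is the auxiliary fact that every endomorphism $h$ of a finite core $\calC$ is an automorphism. I would prove this in two parts. For surjectivity, let $\calC''$ be the substructure of $\calC$ induced by the image $h(C)$; then $h \colon \calC \to \calC''$ and the inclusion $\calC'' \hookrightarrow \calC$ are both homomorphisms, so $\calC$ and $\calC''$ are homomorphically equivalent. The core property forbids $\calC''$ from being a proper substructure, forcing $h(C) = C$, and finiteness then makes $h$ bijective. To see that $h^{-1}$ is itself a homomorphism, I would apply the pigeonhole principle to the finite set of endomorphisms of $\calC$: the powers $h, h^2, h^3, \ldots$ must satisfy $h^m = h^n$ for some $n > m \geq 1$, and cancelling the bijection $h$ yields $h^{n-m} = \id_{\calC}$, so $h^{n-m-1}$ is a homomorphism that serves as a two-sided inverse of $h$.

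I would finish by applying the auxiliary fact twice. Setting $\sigma := g \circ f$, which is an endomorphism of $\relA'$, the fact gives that $\sigma$ is an automorphism, so $\sigma^{-1} \circ g$ is a homomorphism $\relB' \to \relA'$ that left-inverts $f$, whence $f$ is injective. Applying the fact to $f \circ g$ shows $f \circ g$ is bijective on $B'$, so $f$ is also surjective. Hence $f$ is a bijective homomorphism whose (unique) two-sided inverse $\sigma^{-1} \circ g$ is a homomorphism, i.e., $f$ is an isomorphism from $\relA'$ to $\relB'$.

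The main subtlety worth flagging is the second half of the auxiliary fact: a bijective homomorphism need not be an isomorphism in general, and the cyclic-powers trick (relying on finiteness of the endomorphism monoid of $\calC$) is what rescues the argument for finite cores. Everything else is essentially bookkeeping with compositions of homomorphisms.
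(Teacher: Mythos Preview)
Your proof is correct. Note, however, that the paper does not actually supply its own proof of this lemma: it labels the result folklore and refers the reader to \cite{FlumG06}. What you have written is precisely the standard argument one finds in such references---obtain homomorphisms in both directions between the two cores, show that every endomorphism of a finite core is an automorphism (surjectivity from the core property, then the power trick to get a homomorphic inverse), and conclude. So there is nothing to compare against in the paper itself; your write-up would serve perfectly well as the omitted proof.
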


\subsection{Complexity theory background}


Throughout, we use $\Sigma$ to denote an alphabet over which
strings are formed.
All problems to be considered are viewed as counting problems.
So, a \emph{problem} is a mapping $Q: \Sigma^* \to \N$.
We view decision problems as problems where,
for each $x \in  \Sigma^*$, it holds that $Q(x)$ is equal to $0$ or $1$.
We use $\p$ (as usual) to denote the class of problems 
(which, again, are mappings $\Sigma^* \to \N$)
that can be computed in polynomial time.
A \emph{parameterization} is a mapping
$\kappa: \Sigma^* \to \Sigma^*$.
A parameterized problem is a pair $(Q, \kappa)$
consisting of a problem $Q$ and a parameterization~$\kappa$.
A partial function $T: \Sigma^* \to \N$
is \emph{polynomial-multiplied} 
with respect to a parameterization~$\kappa$
if there exists a computable function $f: \Sigma^* \to \N$
and a polynomial $p: \N \to \N$ such that,
for each $x \in \dom(T)$, 
it holds that $T(x) \leq f(\kappa(x)) p(|x|)$.

\begin{definition}
Let $\kappa: \Sigma^* \to \Sigma^*$
be a parameterization.
A partial mapping $r: \Sigma^* \to \Sigma^*$
is \emph{FPT-computable} with respect to $\kappa$
if there exist a polynomial-multiplied function $T: \Sigma^* \to \N$
(with respect to $\kappa$)
with $\dom(T) = \dom(r)$
and an algorithm $A$ such that,
for each string $x \in \dom(r)$,
the algorithm $A$ computes $r(x)$ within time $T(x)$;
when this holds, we also say that $r$ is \emph{FPT-computable}
with respect to $\kappa$ \emph{via $A$}.
\end{definition}

As is standard, we may and do freely interchange among
elements of $\Sigma^*$, $\Sigma^* \times \Sigma^*$, and $\N$.
We define $\FPT$ to be the class 
that contains a parameterized problem $(Q, \kappa)$
if and only if $Q$ is FPT-computable with respect to $\kappa$.

We now introduce a notion of reduction for counting problems,
which is a form of Turing reduction.
We use $\powfin(A)$ to denote the set containing all finite subsets of
$A$.

\begin{definition}
A \emph{counting FPT-reduction} from a parameterized problem
$(Q, \kappa)$ to a second parameterized problem $(Q', \kappa')$
consists of 
 a computable function $h: \Sigma^* \to \powfin(\Sigma^*)$,
and
an algorithm $A$ such that:
\begin{itemize}
\item  on an input $x$,
$A$ may make oracle queries of the form
$Q'(y)$ with $\kappa'(y) \in  h(\kappa(x))$, and
\item $Q$ is FPT-computable with respect to $\kappa$ via $A$.
\end{itemize}
\end{definition}

We use $\clique$ to denote the decision problem  where $(k, G)$ is a
yes-instance when $G$ is a graph that contains a clique of size $k\in
\mathbb{N}$. By $\sclique$ we denote the problem of counting, given
$(k, G)$, the number of $k$-cliques in the graph $G$.
The parameterized versions of these problems,
denoted by $\pclique$ and $\psclique$, are defined
via the parameterization $\pi_1$.
We will make tacit use of the following well-known facts:
$\FPT$ is closed under counting FPT-reduction;
$\pclique$ is complete for $\W{1}$ under counting FPT-reduction; and,
$\psclique$ complete for $\sW{1}$ under counting
FPT-reduction.

A \emph{promise problem} is a pair $\lpr Q, I \rpr$
where $Q$ is a problem and $I \subseteq \Sigma^*$.
When $C$ is a complexity class, 
define $\prom{C}$ to contain a promise problem
$\lpr Q, I \rpr$ when there exists $P \in C$
such that, for all $x \in I$, it holds that $P(x) = Q(x)$.
A \emph{parameterized promise problem}
is a pair $(\lpr Q, I \rpr, \kappa)$ 
consisting of a promise problem $\lpr Q, I \rpr$ and
a parameterization $\kappa$;
such a problem will also be notated by $\lpr (Q, \kappa), I \rpr$.
When $C$ is a parameterized complexity class,
define $\prom{C}$ to contain a promise problem
$(\lpr Q, I \rpr, \kappa)$ 
when there exists a problem $P$ such that $(P, \kappa) \in C$
and 
for all $x \in I$, it holds that $P(x) = Q(x)$.

\section{Conjunctive Queries and Computational Problems}

 A \emph{conjunctive query} is a relational first-order formula (possibly with free variables) of the form $\exists v_1 \ldots \exists v_n \bigwedge_{i=1}^m \alpha_i$ where each $\alpha_i$ is a predicate application, that is, an atomic formula of the form $R(\vec{u})$ where $R$ is a relation symbol and $\vec{u}$ is a tuple of variables.
Since the only type of queries that we are concerned with in this article are conjunctive queries, we will sometimes simply use \emph{query} to refer to a conjunctive query.

\begin{definition}
\label{def:natural-model}
To a conjunctive query $\phi$ over the vocabulary $\tau$,
 we assign a structure $\relA=\relA_\phi$,
 called the \emph{\myindex{natural model}},
 as follows: 
the domain of $\relA$ is $\var{\phi}$;
$\relA$ is over the vocabulary $\tau$; and
for each relation symbol $R\in \tau$,
 we set $R^\relA:=\{\vec{a} \mid R(\vec{a}) \text{ is an atom of } \phi\}$.
\end{definition}


To each conjunctive query $\phi$ we assign the pair $(\relA,S)$ where $\relA$ is the natural model of $\phi$ and $S$ the set of its free variables. 
From such a pair $(\relA,S)$,
it is easy to reconstruct the corresponding query $\phi$: 
each tuple of a relation of $\relA$ 
is made into an atom, and
then, one existentially quantifies the elements of $A$
 not in $S$ to obtain $\phi$.
Because of this easy correspondence between queries and pairs $(\relA,
S)$, in a slight abuse of notation, we do not
differentiate between pairs $(\relA, S)$ and queries throughout.
 In particular, we will call a pair $(\relA, S)$ a query, and we will use $\calC$ interchangably for classes of queries and of pairs $(\relA,S)$. 

Let $\phi$ be a conjunctive query with assigned pair $(\relA, S)$ and let $\relB$ be a structure. Then a function $h:S \rightarrow B$ is a satisfying assignment of $\phi$ if and only if it can be extended to a homomorphism from $\relA$ to $\relB$;
we denote the set of such functions by $\hom(\relA, \relB, S)$. 
In this article, we are interested in
the following counting problem.

\problem{$\sCQ$}{A query $(\relA,S)$ and a structure $\relB$.}{Compute $|\hom(\relA,\relB,S)|$.}

In the case of a conjunctive query $(\relA,\emptyset)$ without free variables, the problem $\sCQ$ amounts to deciding whether or not there exists a homomorphism from $\relA$ to $\relB$. 
We define this case as the problem $\CQ$.

\problem{$\CQ$}{A query $(\relA,\emptyset)$ and a structure
  $\relB$.}{Decide if
there exists a homomorphism from $\relA$ to $\relB$.}

We define $p$-$\sCQ$ to be the parameterized problem 
$(\sCQ, \pi_1)$, that is, we take the parameter of each instance
$((\relA,S), \relB)$ to be $(\relA,S)$.
(Formally, we view each instance of  $p$-$\sCQ$ 
as a pair of strings, where the first component
encodes the query, and the second component encodes the structure.)
In analogy to $\sCQ$, we define 
$\smallp\CQ$ to be the parameterized problem $(\CQ, \pi_1)$.


%
%

%
%


We define the hypergraph of a query $(\relA,S)$ to be the hypergraph $\calH=(V,E)$ where $V$ is the domain of $\relA$ and $E:= \{ \dom(t)\mid t\in R^\relA, \text{ $R^\relA$ is a relation of $\relA$}\}$ where $\dom(t)$ denotes the set of elements appearing in $t$. The treewidth of $(\relA, S)$ is defined to be that of its hypergraph.
Dalmau, Kolaitis and Vardi \cite{DalmauKV2002} proved that $\CQ$ can be solved efficiently, when the treewidth of the cores of the queries is bounded.

\begin{theorem}[\cite{DalmauKV2002}]\label{thm:DalmauKV}
Let $k\in \mathbb{N}$ be a fixed constant. Let $\calC_k$ be the class of all structures with cores of treewidth at most $k$. Then 
the promise problem 
$\lpr \CQ, \calC_k \times \Sigma^* \rpr$ is in $\prom{\p}$.
\end{theorem}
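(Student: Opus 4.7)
The plan is to run the $k$-consistency (local consistency) algorithm, which for each fixed $k$ terminates in polynomial time in the combined size of $\relA$ and $\relB$. The algorithm maintains, for each subset $X \subseteq A$ of size at most $k+1$, a family $H_X$ of partial maps $X \to B$. Initially each $H_X$ contains every partial map whose restriction to each atomic-formula scope lying within $X$ lies in the corresponding relation of $\relB$. One then iteratively prunes any partial map in some $H_X$ that does not extend to some partial map in $H_Y$ for a superset $Y \supseteq X$ with $|Y| \leq k+1$. Since each $H_X$ has size at most $|B|^{k+1}$ and each pruning step strictly decreases $\sum_X |H_X|$, the loop terminates after polynomially many rounds. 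The algorithm outputs \emph{yes} iff every $H_X$ is nonempty at stabilization.

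Soundness is immediate: any homomorphism $h \colon \relA \to \relB$ yields, by restriction, a partial map in every $H_X$ that survives every pruning step. The harder direction is completeness under the promise that the core $\hatrelA$ of $\relA$ has treewidth at most $k$. Because $\hatrelA$ and $\relA$ are homomorphically equivalent, it suffices to construct a homomorphism $\hatrelA \to \relB$. Fixing a width-$k$ tree decomposition of $\hatrelA$ with bags $(X_t)_{t \in T}$, I would process its nodes in root-to-leaf order, at each node $t$ selecting a partial map from $H_{X_t}$ (viewing $X_t \subseteq \hatA \subseteq A$) that agrees with the choice at the parent on the shared interface. The post-stabilization consistency property guarantees that such a compatible extension always exists, and the running-intersection property of tree decompositions ensures that the bag-wise choices paste into a well-defined global homomorphism $\hatrelA \to \relB$.

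The main technical obstacle is bridging the gap between the consistency information, which is computed relative to $\relA$, and the bounded-treewidth structure $\hatrelA$, which the algorithm never sees. This is handled by observing that, since $\hatrelA$ is a retract of $\relA$, any retraction $\rho \colon \relA \to \hatrelA$ can be composed with partial maps surviving in $H_X$ for $X \subseteq \hatA$ to witness the analogous consistency families on $\hatrelA$; hence $k$-consistency with respect to $\relA$ entails $k$-consistency with respect to $\hatrelA$. Bounded treewidth of $\hatrelA$ is precisely the combinatorial property making this local-consistency condition sufficient for a global homomorphism to exist, which is the heart of the Dalmau--Kolaitis--Vardi argument and is where the promise is genuinely used. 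Since $k$ is a fixed constant, the overall running time is polynomial in the input, placing the promise problem in $\prom{\p}$.
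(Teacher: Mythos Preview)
The paper does not give a proof of this theorem; it is quoted from Dalmau--Kolaitis--Vardi and used as a black box (notably in the proof of Lemma~\ref{lem:computecores}). There is thus no in-paper argument to compare against.

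Your sketch is a faithful outline of the original DKV proof: run width-$(k{+}1)$ local consistency on $(\relA,\relB)$, note soundness, and under the promise establish completeness by walking a tree decomposition of the core. One passage is somewhat garbled, though. In the bridging step you invoke the retraction $\rho \colon \relA \to \hatrelA$ and speak of ``composing'' it with surviving partial maps, but for $X \subseteq \hatA$ the retraction restricts to the identity on $X$, so composing does nothing. The point you actually need is simpler: since $\hatrelA$ is a substructure of $\relA$, the stabilized families $\{H_X : X \subseteq \hatA,\ |X| \le k{+}1\}$ computed relative to $\relA$ already satisfy the partial-homomorphism and forth-extension conditions relative to $\hatrelA$ (fewer tuples to respect, fewer supersets $Y$ to extend into), and hence directly witness $(k{+}1)$-consistency of $(\hatrelA,\relB)$. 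Equivalently and more conceptually: whether $k$-consistency can be established depends only on the homomorphism type of the left-hand structure, as is transparent from the existential pebble-game formulation. With that clarification the argument is correct.
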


Grohe \cite{Grohe07} showed that this result is optimal.

\begin{theorem}[\cite{Grohe07}]\label{thm:grohe2}
Let $\calC$ be a recursively enumerable class of structures of bounded arity. Assume $\FPT \ne \W{1}$. Then the following statements are equivalent:
\begin{enumerate}
 \item $\lpr \CQ, \calC \times \Sigma^* \rpr \in \prom{\p}$.
 \item $\lpr  \smallp\CQ,  \calC \times \Sigma^* \rpr   \in \prom{\FPT}$.
 \item There is a constant $c$ such that the cores of the structures in $\calC$ have treewidth at most~$c$.
\end{enumerate}
\end{theorem}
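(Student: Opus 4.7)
The plan is to prove the cycle $(3) \Rightarrow (1) \Rightarrow (2) \Rightarrow (3)$. The implication $(3) \Rightarrow (1)$ is a direct invocation of Theorem~\ref{thm:DalmauKV}: if cores in $\calC$ have treewidth at most a fixed $c$, then $\calC \subseteq \calC_c$, and the promise problem is in $\prom{\p}$. The implication $(1) \Rightarrow (2)$ is immediate from the definitions, since any polynomial-time algorithm is trivially FPT. All the substance is in proving $(2) \Rightarrow (3)$.

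I would prove $(2) \Rightarrow (3)$ by contrapositive: assuming that the cores of structures in $\calC$ have unbounded treewidth, I would construct a counting FPT-reduction from $\pclique$ to $\smallp\CQ$ restricted to $\calC$, which together with the assumption $\FPT \ne \W{1}$ contradicts~(2). The reduction takes an input $(k, G)$ to $\pclique$ and uses the recursive enumerability of $\calC$ to search for a structure $\relA \in \calC$ whose core $\relA^{*}$ has treewidth at least $g(k)$, for a function $g$ to be chosen. Since a homomorphism $\relA \to \relB$ exists iff one exists $\relA^{*} \to \relB$ (by homomorphic equivalence and the basic properties of cores recorded in Lemmas~\ref{lem:coreunique} and~\ref{lem:ismorphicCores}), it suffices to encode $k$-clique so that homomorphisms from $\relA^{*}$ to a suitably built target structure $\relB_G$ correspond to $k$-cliques in $G$.

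The essential structural tool is the excluded grid theorem, refined for the homomorphism setting: if $\relA^{*}$ has very large treewidth, then the Gaifman graph of $\relA^{*}$ contains a large grid-like minor, and crucially, because $\relA^{*}$ is a core, this minor is "rigid" in that no endomorphism of $\relA^{*}$ can collapse the branch sets that realize the minor. One then fixes $g(k)$ large enough that a $(k \times k)$-grid-like minor is guaranteed, indexes the branch sets by coordinates $(i, j)$ with $1 \le i,j \le k$, and builds $\relB_G$ as a product-like structure whose domain consists of copies of $V(G)$ assigned to vertices of $\relA^{*}$ according to which branch set they belong to, and whose relations enforce that any homomorphism picks a choice of vertex $v_i \in V(G)$ per row such that all pairs $\{v_i, v_j\}$ are edges of $G$. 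The rigidity of the minor in the core is what guarantees the correspondence is bijective up to a bounded factor depending only on $k$.

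The main obstacle is making this encoding go through for \emph{cores} rather than for arbitrary structures: the excluded grid theorem gives a minor in the Gaifman graph, but one needs a minor that behaves well under homomorphism, and one needs to show that among classes of cores with unbounded treewidth, arbitrarily large such well-behaved minors must appear. This is the technical heart of Grohe's argument, relying on (i) an excluded-minor-type theorem for cores that upgrades "large treewidth" to "contains a large useful minor," and (ii) a careful design of $\relB_G$ that tolerates the bounded-arity relations of $\relA^{*}$ while still being rigid enough to force a $k$-clique. Once this gadgetry is in place, the reduction is FPT-computable because the search for $\relA$ depends only on $k$ and the construction of $\relB_G$ is polynomial in $|G|$, yielding the required $\W{1}$-hardness.
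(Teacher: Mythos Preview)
The paper does not give its own proof of this theorem: it is stated with a citation to Grohe~\cite{Grohe07} and no proof follows in the text. The only remark the paper makes is, after stating Theorem~\ref{thm:trichotomy-re}, that Theorem~\ref{thm:grohe2} (and Theorem~\ref{thm:dalmauJ}) can be recovered from the trichotomy as corollaries. So the paper's ``proof'' is: specialize the trichotomy to Boolean queries $(\relA,\emptyset)$, where $\contract(\relA,\emptyset)$ is trivially empty and hence of bounded treewidth, so case~(3) of the trichotomy never occurs; cases~(1) and~(2) then give exactly the Dalmau--Kolaitis--Vardi tractability and the $\clique$-equivalence, matching the dichotomy in Theorem~\ref{thm:grohe2}. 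Of course this derivation is not self-contained, since the hardness direction of the trichotomy (Corollary~\ref{cor:decision}, via Lemma~\ref{lem:DMreduction}) itself rests on results from~\cite{DurandM14} that in turn trace back to Grohe's techniques.

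Your proposal instead sketches Grohe's original direct proof. The cycle $(3)\Rightarrow(1)\Rightarrow(2)$ is exactly right and easy. For $(2)\Rightarrow(3)$ your outline is faithful to Grohe's strategy: enumerate $\calC$ to find a structure whose core has large treewidth, invoke the excluded grid theorem to get a large grid minor in the Gaifman graph of the core, and build a target structure $\relB_G$ over $V(G)$-labelled copies so that homomorphisms encode $k$-cliques. You also correctly flag the genuine technical hurdle, namely that rigidity of the core is what makes the minor usable for the encoding. As a high-level sketch this is accurate; what is missing is any indication of how the ``well-behaved minor in a core'' step is actually carried out (this is the bulk of Grohe's paper and is far from a routine application of Robertson--Seymour), so as written your argument is an outline rather than a proof. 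Compared to the paper's route, yours is the honest from-scratch argument, whereas the paper simply imports the result and later observes it is subsumed by the trichotomy.
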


%


Dalmau and Jonsson \cite{DalmauJ04} considered the analogous question for $\sCQ$ for quantifier free queries and found that cores do not help in this setting.

\begin{theorem}[\cite{DalmauJ04}]\label{thm:dalmauJ}
Let $\calQ$ be the class of all quantifier free conjunctive queries,
i.e., queries of the form $(\relA, A)$. 
Let $\calC$ be a recursively enumerable class of structures of bounded arity in $\calQ$. Assume $\FPT \ne \sW{1}$. Then the following statements are equivalent:
\begin{enumerate}
 \item $\lpr \sCQ, \calC \times \Sigma^* \rpr \in \prom{\p}$.
 \item $\lpr \smallp\sCQ, \calC \times \Sigma^* \rpr \in \prom{\FPT}$.
 \item There is a constant $c$ such that the structures in $\calC$ have treewidth at most~$c$.
\end{enumerate}
\end{theorem}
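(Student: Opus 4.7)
The plan is to prove the three implications in a cycle $(3) \Rightarrow (1) \Rightarrow (2) \Rightarrow (3)$. The implication $(1) \Rightarrow (2)$ is immediate, as any polynomial-time algorithm is FPT-computable with respect to any parameterization; we simply take the constant function $f \equiv 1$ in the polynomial-multiplied bound.

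For $(3) \Rightarrow (1)$, I would apply a standard bounded-treewidth dynamic program for counting homomorphisms. Fix a treewidth bound $c$. Given an instance $((\relA, A), \relB)$ with $(\relA, A) \in \calC$, first compute in polynomial time a nice tree decomposition of the hypergraph of $(\relA, A)$ of width at most $c$ (using, e.g., Bodlaender's algorithm, since $c$ is fixed). Process the decomposition bottom-up: for each bag $X_t$, maintain a table indexed by partial assignments $f \colon X_t \to B$ whose entry $c_t(f)$ records the number of homomorphisms from the substructure of $\relA$ induced by the variables appearing in the subtree rooted at $t$ that agree with $f$ on $X_t$. The introduce, forget, and join operations are standard; crucially, the hypergraph definition guarantees that every tuple of every relation of $\relA$ lies inside some bag, so all relational constraints can be locally verified as we build the tables. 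Each bag has at most $c+1$ variables, giving at most $|B|^{c+1}$ entries per table, and the total runtime is $|B|^{O(c)} \cdot \mathrm{poly}(|\relA|, |\relB|)$, which is polynomial for fixed $c$. Summing over root-bag entries yields $|\hom(\relA,\relB,A)|$.

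For $(2) \Rightarrow (3)$, I would argue the contrapositive: if the treewidths of the structures in $\calC$ are unbounded, then $\lpr \smallp\sCQ, \calC \times \Sigma^* \rpr \notin \prom{\FPT}$ unless $\FPT = \sW{1}$, by exhibiting a counting FPT-reduction from $\psclique$. Using the recursive enumerability of $\calC$, for each target parameter $k$ I would effectively enumerate $\calC$ until a structure $\relA_k$ is found whose hypergraph has treewidth at least $g(k)$ for a sufficiently fast-growing $g$; by the Excluded Grid Theorem, such an $\relA_k$ contains a large grid-like minor, which serves as a gadget that routes $k$ distinguished strands forced to land in pairwise-distinct vertices of an input graph. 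Given $(k, G)$, I would build a structure $\relB_G$ so that $|\hom(\relA_k, \relB_G, A_k)|$ decomposes into a known combination of homomorphism counts from smaller quotient patterns into $G$, one of which equals (up to a fixed multiplicative factor determined by internal automorphisms and by the non-gadget part of $\relA_k$) the number of $k$-cliques of $G$. Möbius inversion over the lattice of vertex-identification patterns then extracts the clique count from polynomially many oracle queries, each carrying a parameter depending only on $k$.

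The main obstacle is this last reduction. In contrast with the decision case (Theorem~\ref{thm:grohe2}), one cannot first pass to the core of $\relA$: cores preserve homomorphism existence but not homomorphism counts, and passing to the core can destroy the very grid minors that witness hardness, which is precisely why the treewidth condition here is imposed on $\relA$ itself rather than on its core. Moreover, because $\sCQ$ demands exact counts, merely extracting a clique-existence signal from a nonzero homomorphism count does not suffice; the reduction must separate the contribution of true $k$-cliques from those of all partially-collapsed grid-to-$G$ maps, and it is precisely this disentangling---via interpolation or Möbius inversion---where Dalmau and Jonsson's technique \cite{DalmauJ04} does the real work.
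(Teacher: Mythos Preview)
The paper does not give its own proof of this statement: Theorem~\ref{thm:dalmauJ} is cited as a result of Dalmau and Jonsson~\cite{DalmauJ04}, and indeed the paper \emph{uses} it as an ingredient (in the proof of Lemma~\ref{lem:contracthard}) to establish the main trichotomy. The remark after Theorem~\ref{thm:trichotomy-re} that Theorem~\ref{thm:dalmauJ} can be ``derived as a corollary'' refers only to the logical implication between statements---for quantifier-free queries $(\relA,A)$ one has $\aug{\relA,A}$ already a core and $\contract(\relA,A)$ equal to the hypergraph of $\relA$, so the trichotomy specializes to exactly this dichotomy---but the proof of the trichotomy relies on Theorem~\ref{thm:dalmauJ}, so this is not an independent argument.

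Your sketch, by contrast, is a faithful outline of the original Dalmau--Jonsson proof: $(3)\Rightarrow(1)$ by bounded-width dynamic programming, $(1)\Rightarrow(2)$ trivially, and $(2)\Rightarrow(3)$ by the Excluded Grid Theorem together with inclusion--exclusion/interpolation to recover the exact clique count from homomorphism counts. You also correctly identify the crucial point that distinguishes this from Grohe's decision-problem argument, namely that one cannot pass to cores. The only place where your outline remains genuinely schematic is the last step: ``routing strands through a grid minor'' in an arbitrary bounded-arity structure and arranging the inversion so that the non-clique terms have strictly smaller complexity (so that the system is solvable with polynomially many oracle calls whose parameters depend only on $k$) is exactly the technical heart of~\cite{DalmauJ04}, and your proposal defers to that paper for it, which is appropriate here.
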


It is common to consider classes of queries defined by restricting their associated hypergraph. For $\sCQ$ it turns out to be helpful to also encode which vertices of a hypergraph correspond to free variables, which is formalized in the following definition. A pair $(\calH, S)$ where $\calH$ is a hypergraph and $S$ is a subset of the vertices of $\calH$ is called an $S$-hypergraph. The $S$-hypergraph of a query $(\relA, S)$ is $(\calH, S)$ where $\calH$ is the hypergraph of $\relA$.

In \cite{DurandM14}, the following version of $\sCQ$ is considered.

\problem{$\sCQh$}{An $S$-hypergraph $(\calH, S)$ and an instance $((\relA, S), \relB)$ of $\sCQ$ where $\calH$ is the hypergraph of $\relA$.}{Compute $|\hom(\relA,\relB,S)|$.}

We define
$p$-$\sCQh$ to be the parameterized problem
$(\sCQh, \pi_1)$; here, 
an instance of $\sCQh$ is viewed as a pair
$((\calH, S), ((\relA, S),\relB))$, on which 
the operator $\pi_1$
returns
$(\calH, S)$.

It turns out that for $\sCQh$ a parameter called $S$-star size is of
critical importance. Let $\calH=(V,E)$ be a hypergraph and $S\subseteq
V$. Let $C$ be the vertex set of a connected component of
$\calH[V-S]$. Let $E_{C}$ be the set of hyperedges $\{e\in E\mid e
\cap C \ne \emptyset\}$ and $V_C:= \bigcup_{e\in E_C}
e$.
Then
$\calH[V_C]$ is called an \emph{$S$-component} of $\calH$. 
The size of a biggest independent set in $\calH[V_C\cap S]$ is called
the $S$-star size of the $S$-component $\calH[V_C]$.
The $S$-star size of $(\calH, S)$ is then defined to be the maximum
$S$-star size taken over all $S$-components of $(\calH, S)$. 
By the \emph{quantified star size} of a query $(\relA, S)$ 
we refer to the $S$-star size of the $S$-hypergraph associated to $(\relA, S)$. For more explanations on these notions and examples see \cite[Section 3.2]{thesis}.

\begin{theorem}[\cite{DurandM14}]\label{thm:bounded}
 Let $\calG$ be a recursively enumerable class of $S$-hypergraphs of bounded arity. Assume that $\W{1} \ne \FPT$. Then the following statements are equivalent:
 \begin{enumerate}
  \item \label{bnd:1} 
$\lpr \sCQh, \calG \times \Sigma^* \rpr \in \prom{\p}$.
  \item \label{bnd:2} 
$\lpr \textup{$p$-$\sCQh$}, \calG \times \Sigma^* \rpr \in \prom{\FPT}$.
  \item \label{bnd:3} There is a constant $c$ such that for each $S$-hypergraph $(\calH,S)$ in $\calG$ 
  the treewidth of $\calH$ and the $S$-star size of $\calH$ 
 are at most $c$.
\end{enumerate}
 \end{theorem}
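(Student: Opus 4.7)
The plan is to prove the theorem along the cycle (\ref{bnd:1})~$\Rightarrow$~(\ref{bnd:2})~$\Rightarrow$~(\ref{bnd:3})~$\Rightarrow$~(\ref{bnd:1}). The implication (\ref{bnd:1})~$\Rightarrow$~(\ref{bnd:2}) is immediate: any polynomial-time algorithm on the promise class trivially witnesses membership in $\prom{\FPT}$.

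For (\ref{bnd:3})~$\Rightarrow$~(\ref{bnd:1}), I would design a polynomial-time algorithm that exploits both structural restrictions simultaneously. Given $(\calH,S)$ with treewidth and $S$-star size both bounded by~$c$, together with a database $\relB$, observe that the count $|\hom(\relA,\relB,S)|$ factors across the $S$-components: once an assignment to the free variables in the $S$-part of a component is fixed, the number of extensions to the quantified variables of that component can be computed independently by a standard tree-decomposition dynamic program based on the treewidth bound on $\calH$. The bounded $S$-star size then controls the number of joint partial assignments to the free variables of each component that need be enumerated; indeed, a small maximum independent set in $\calH[V_C\cap S]$ forces a small vertex cover there, so that the free variables of the component are covered by a bounded number of hyperedges and hence take only polynomially many joint values through tuples of $\relB$. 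Iterating over these assignments and multiplying the per-component contributions yields the total in polynomial time.

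For (\ref{bnd:2})~$\Rightarrow$~(\ref{bnd:3}), I would argue contrapositively and split into two hardness reductions from $\pclique$ (which suffices since the assumption is $\W{1}\ne\FPT$). If the $S$-star size is unbounded in $\calG$, then for every~$k$ we can locate an $S$-hypergraph whose $S$-part, restricted to some $S$-component, contains an independent set $\{s_1,\dots,s_k\}$; since no hyperedge of $\calH$ contains two of the $s_i$, an input graph $G$ of $\pclique$ can be encoded into $\relB$ so that the $s_i$ range independently over $V(G)$, while pairwise adjacency of their images is enforced via gadgets built on the non-$S$ part of the component, yielding a counting FPT-reduction from $\pclique$. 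If instead the treewidth of $\calH$ is unbounded, I would invoke Grohe's excluded-grid technology to find a large grid-like substructure inside $\calH$ and embed $\pclique$ into it through the quantified variables, mirroring the treewidth dichotomy of Theorem~\ref{thm:grohe2}.

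The main obstacle lies in this last step: transferring Grohe's minor-based embedding from the decision problem on cores to a counting problem on $S$-hypergraphs requires careful control of how many homomorphic pre-images each $k$-clique generates, so that clique counts can be read off (for example by inclusion--exclusion over the structure) from values returned by $\sCQh$ rather than merely detected as nonzero. Simultaneously, one must arrange the reduction so that the few free variables it uses do not inflate the $S$-star size of the produced instances, which would push them into the other, already-handled case rather than the treewidth case actually being reduced from.
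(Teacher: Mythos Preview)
The paper does not prove Theorem~\ref{thm:bounded}. It is quoted as a prior result of Durand and Mengel~\cite{DurandM14} and used throughout as a black box (see, e.g., the proofs of Corollary~\ref{cor:countingalg}, Lemma~\ref{lem:decision}, and Lemma~\ref{lem:DMreduction}). Consequently there is no proof in this paper to compare your sketch against; what you have written is an outline of a proof of the cited theorem itself, not of anything argued here.

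That said, your cycle $(\ref{bnd:1})\Rightarrow(\ref{bnd:2})\Rightarrow(\ref{bnd:3})\Rightarrow(\ref{bnd:1})$ is the natural shape, and the two hardness cases you isolate for the contrapositive of $(\ref{bnd:2})\Rightarrow(\ref{bnd:3})$ match how the present paper later re-uses pieces of the Durand--Mengel argument: Lemma~\ref{lem:starsize} records that unbounded $S$-star size already yields a reduction from $\sclique$, and Lemma~\ref{lem:DMreduction} records that unbounded treewidth yields a reduction from $\clique$. One correction to your framing of the obstacle: in the unbounded-treewidth case you do \emph{not} need to control multiplicities or recover clique counts by inclusion--exclusion. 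The assumption in Theorem~\ref{thm:bounded} is only $\W{1}\ne\FPT$, so a reduction from the \emph{decision} problem $\pclique$ suffices, and indeed Lemma~\ref{lem:DMreduction} only claims a reduction from $\clique[\mathbb{N}]$, not from $\sclique[\mathbb{N}]$. The delicate counting-aware reduction you worry about is not needed for this theorem.
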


We have seen that for the problem $\CQ$,
 cores of structures are crucial, while 
in the classification due to Dalmau and Jonsson,
they do not matter at all. Thus we introduce a notion of cores for conjunctive queries that interpolates between these two extreme cases. The idea behind the definition is that we require the homomorphisms between $(\relA, S)$ and its core to be the identity on the free variables, while they may map the quantified variables in any way that leads to a homomorphism. This is formalized as follows.
 
\begin{definition}
\label{def:augmented}
For a conjunctive query $(\relA,S)$ where $\relA$ is defined on vocabulary $\tau$,
we define the \emph{\myindex{augmented structure}},
denoted by
$\aug{\relA,S}$,
to be the structure
 over the vocabulary $\tau\cup \{R_a\mid a\in S\}$ 
where $R_a^{\aug{\relA,S}}:=\{a\}$. 
We define the \emph{\myindex[core]{core of a conjunctive query}} of $(\relA,S)$ 
to be the core of~$\aug{\relA,S}$.
 \end{definition}

\begin{example}
 Let $(\relA,S)$ be a query without free variables, that is, where $S=\emptyset$; then the core of $(\relA, S)$ is the core of $\relA$.
 If $(\relA, S)$ is quantifier-free, that is, where $S=A$, then
 the core of $(\relA, S)$ equals~$\relA$.
\end{example}

The cores of conjunctive queries were essentially already studied by Chandra and Merlin in a seminal paper \cite{ChandraM1977} although the notation used there is different. We give a fundamental result on conjunctive queries.
 We call two queries $(\relA_1, S)$ and $(\relA_2,S)$ \emph{\myindex[equivalent]{equivalent queries}} if for each structure $\relB$ we have $\hom(\relA_1, \relB,S)= \hom(\relA_2, \relB, S)$.

\begin{theorem}[\cite{ChandraM1977}]\label{thm:ChandraMerlin}
 If two conjunctive queries $(\relA_1,S)$ and $(\relA_2, S)$ have the same core
(up to isomorphism), then they are equivalent.
\end{theorem}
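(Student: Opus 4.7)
The plan is to reduce the claim to a purely homomorphism-theoretic statement about the augmented structures, and then invoke Lemma~\ref{lem:ismorphicCores}. First, I would record the trivial observation that if two structures have isomorphic cores, they are homomorphically equivalent: by Lemma~\ref{lem:ismorphicCores}, their cores are isomorphic, but also each structure is homomorphically equivalent to its core by definition, and composing these homomorphisms (using the fixed isomorphism between the two cores) yields homomorphisms between the structures in both directions. Applied to our setting, assuming $(\relA_1, S)$ and $(\relA_2, S)$ have the same core means $\aug{\relA_1, S}$ and $\aug{\relA_2, S}$ are homomorphically equivalent.

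The crux is then to translate the statement ``$f \in \hom(\relA_i, \relB, S)$'' into a homomorphism statement about $\aug{\relA_i, S}$ so that homomorphic equivalence of the augmented structures can be applied. To do this I would introduce, for any target structure $\relB$ and any function $f: S \to B$, an augmented target $\relB_f$ over the vocabulary $\tau \cup \{R_a \mid a \in S\}$ with $R^{\relB_f} := R^{\relB}$ for $R \in \tau$ and $R_a^{\relB_f} := \{f(a)\}$. The singleton relations act as ``pointers'' forcing any homomorphism to send $a$ to $f(a)$.

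The key lemma to verify is:  $f \in \hom(\relA_i, \relB, S)$ if and only if there exists a homomorphism from $\aug{\relA_i, S}$ to $\relB_f$. The forward direction is immediate: any extension $h$ of $f$ witnessing $f \in \hom(\relA_i, \relB, S)$ is automatically a homomorphism of augmented structures, since $h(a) = f(a) \in R_a^{\relB_f}$ for each $a \in S$. The reverse direction is equally direct: any homomorphism $h$ from $\aug{\relA_i, S}$ to $\relB_f$ must send each $a \in S$ into the singleton $\{f(a)\}$, so $h$ extends $f$; and $h$ is a homomorphism from $\relA_i$ to $\relB$ by forgetting the added unary relations.

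Combining these pieces finishes the proof: fix an arbitrary $\relB$ and $f: S \to B$. By the lemma, $f \in \hom(\relA_1, \relB, S)$ iff $\aug{\relA_1, S} \to \relB_f$ iff (by homomorphic equivalence of the augmented structures, composing with a homomorphism $\aug{\relA_2, S} \to \aug{\relA_1, S}$ or vice versa) $\aug{\relA_2, S} \to \relB_f$ iff $f \in \hom(\relA_2, \relB, S)$. Hence $\hom(\relA_1, \relB, S) = \hom(\relA_2, \relB, S)$ for every $\relB$, which is equivalence of the queries. The main conceptual step is recognizing that the singleton relations $R_a$ in the augmentation are exactly the right gadget to force the free variables to be pinned to prescribed values in the target, thereby converting the ``extends to a homomorphism'' condition on $\hom(\relA, \relB, S)$ into an ordinary homomorphism question to which Lemma~\ref{lem:ismorphicCores} can be applied; the remaining work is bookkeeping.
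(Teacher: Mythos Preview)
The paper does not actually prove this theorem; it is stated with a citation to Chandra and Merlin and used as a black box. So there is no ``paper's own proof'' to compare against.

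Your argument is correct and is the standard way to establish this result. The augmented target $\relB_f$ is exactly the right gadget: it converts ``$f$ extends to a homomorphism $\relA_i \to \relB$'' into the existence of an ordinary homomorphism $\aug{\relA_i,S} \to \relB_f$, after which homomorphic equivalence of $\aug{\relA_1,S}$ and $\aug{\relA_2,S}$ finishes things by composition. One cosmetic remark: your invocation of Lemma~\ref{lem:ismorphicCores} in the first paragraph is superfluous (and points in the wrong direction). You are assuming the cores are isomorphic and deducing homomorphic equivalence of the augmented structures; that follows directly from the definition of core (each structure retracts onto its core, and the cores are isomorphic), whereas Lemma~\ref{lem:ismorphicCores} is the converse. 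Dropping that reference would clean up the exposition without changing the substance.
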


\section{Case Complexity}\label{sct:casecomplexity}

In this section we develop a version of the case complexity framework
advocated in \cite{Chen14} which is 
suitable for classifying counting problems.
A main motivation for this framework is the growing amount of research on parameterized problems which are restricted by the permitted values of the parameter. In particular, this kind of problem arises naturally in query answering problems where one often restricts the admissible queries for the inputs (see e.g.~\cite{Grohe07,DalmauJ04,Chen14}). An aim of the case complexity framework as introduced in \cite{Chen14} is to facilitate reductions between  the considered restricted parameterized problems and to show results independent of computability assumptions for the parameter. 

The central notion for our framework is the following: A \emph{case problem} consists of a problem
$Q: \Sigma^* \times \Sigma^* \to \N$
and a subset $S \subseteq \Sigma^*$,
and is denoted $Q[S]$.
When $Q[S]$ is a case problem, we define the following:
\begin{itemize}

\item $\param{Q[S]}$ is the parameterized problem $(P, \pi_1)$
where $P(s, x)$ is defined as equal to 
$Q(s, x)$ if $s \in S$, and as $0$ otherwise.

\item $\prom{Q[S]}$ is the promise problem $\lpr Q, S \times \Sigma^* \rpr$.

\item $\paramprom{Q[S]}$ is the parameterized promise problem
$(\prom{Q[S]}, \pi_1)$.

\end{itemize}

The case problem we consider in this paper will nearly exclusively be $\sCQ[\calC]$ where $\calC$ is a class a class of conjunctive queries. Nevertheless, we stress the fact that our framework is fully generic and we believe that it will in the future also be useful for presenting and proving complexity classifications for other problems.

We now introduce a reduction notion for case problems.

\begin{definition}
A \emph{counting slice reduction} from a case problem $Q[S]$ to a second case problem $Q'[S']$ consists of
\begin{itemize}
 \item a computably enumerable language $U\subseteq \Sigma^* \times \powfin(\Sigma^*)$, and
 \item a partial function 
$r: \Sigma^* \times \powfin(\Sigma^*) \times \Sigma^*\rightarrow \Sigma^*$ 
that has domain $U\times \Sigma^*$ and is 
FPT-computable with respect to $(\pi_1, \pi_2)$ via
an algorithm $A$ that, on input $(s, T, y)$, 
may make queries of the form $Q'(t,z)$
where $t \in T$,
\end{itemize}
such that the following conditions hold:
\begin{itemize}
 \item (coverage) for each $s\in S$, there exists $T \subseteq S'$
such that $(s, T)\in U$, and 
 \item (correctness) for each $(s, T) \in U$, it holds (for each $y\in \Sigma^*$) that 
 \[Q(s,y) = r(s, T, y).\]
\end{itemize}
\end{definition}

As usual in counting complexity, it will often not be necessary to use the full generality of counting slice reductions. Therefore, we introduce a second, parsimonious notion of reductions for case problems which is often general enough but easier to deal with.

\begin{definition}
A \emph{parsimonious slice reduction} from a case problem $Q[S]$ to a second case problem $Q'[S']$ consists of
\begin{itemize}
 \item a computably enumerable language $U\subseteq \Sigma^* \times \Sigma^*$, and
 \item a partial function 
$r: \Sigma^* \times \Sigma^* \times \Sigma^*\rightarrow \Sigma^*$ 
that has domain $U\times \Sigma^*$ and is 
FPT-computable with respect to $(\pi_1, \pi_2)$ 
\end{itemize}
such that the following conditions hold:
\begin{itemize}
 \item (coverage) for each $s\in S$, there exists $s'\in S'$ such that $(s, s')\in U$, and 
 \item (correctness) for each $(t, t') \in U$, it holds (for each $y\in \Sigma^*$) that 
 \[Q(t,y) = Q'(t', r(t, t', y)).\]
\end{itemize}
\end{definition}


We give some basic properties of counting slice reductions. 
Their proofs can be found in the full version of this paper.

\begin{proposition}
If $Q[S]$ parsimoniously slice reduces to $Q'[S']$, then 
 $Q[S]$ counting slice reduces to $Q'[S']$.
\end{proposition}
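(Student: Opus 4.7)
The plan is to exhibit a counting slice reduction directly from the data of the parsimonious slice reduction. Given a parsimonious slice reduction consisting of a computably enumerable $U_p \subseteq \Sigma^* \times \Sigma^*$ and a partial function $r_p$ satisfying the coverage and correctness conditions, I would build the counting slice reduction $(U_c, r_c)$ by lifting second components to singletons: set $U_c := \{(s, \{s'\}) \mid (s, s') \in U_p\}$, and on input $(s, T, y)$ where $T = \{s'\}$ and $(s, s') \in U_p$, let the algorithm $A$ first compute $z := r_p(s, s', y)$, then make one oracle query to $Q'(s', z)$, and return the returned value as $r_c(s, T, y)$.

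The verification then proceeds step by step. First, $U_c$ is computably enumerable because $U_p$ is, and the map $(s, s') \mapsto (s, \{s'\})$ is computable. Second, coverage holds: given $s \in S$, the parsimonious coverage yields some $s' \in S'$ with $(s, s') \in U_p$; then $T := \{s'\} \subseteq S'$ and $(s, T) \in U_c$. Third, the oracle discipline is respected because the only oracle query the algorithm makes is on the first component $s'$, which indeed lies in $T$. Fourth, correctness follows from the parsimonious correctness: for $(s, \{s'\}) \in U_c$,
\[
r_c(s, \{s'\}, y) \;=\; Q'(s', r_p(s, s', y)) \;=\; Q(s, y).
\]

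Finally, I would check the time bound. Since $r_p$ is FPT-computable with respect to $(\pi_1, \pi_2)$, there is a polynomial-multiplied time bound $T_p(s, s', y) \leq f(s, s') \, p(|y|)$ for computing $z = r_p(s, s', y)$; the single oracle call and the bookkeeping to pass $z$ add only polynomial overhead in $|s| + |s'| + |z|$, and $|z|$ itself is bounded by the running time of $r_p$, which is polynomial-multiplied. So the overall running time of $A$ is polynomial-multiplied with respect to $(\pi_1, \pi_2)$, as required.

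There is no real obstacle here; the proposition is essentially a packaging statement, asserting that singleton $T$'s together with a single oracle call suffice to simulate parsimonious behavior. The only mild care point is to be precise about domains (so that $r_c$ is defined exactly on $U_c \times \Sigma^*$) and about keeping the running-time bookkeeping clean so that the FPT-computability parameterization $(\pi_1, \pi_2)$ transfers correctly from $r_p$ to $r_c$.
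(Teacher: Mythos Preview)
Your proposal is correct and is exactly the natural construction: lift the pair $(s,s')$ to $(s,\{s'\})$, have the algorithm compute $r_p(s,s',y)$ and issue the single oracle query $Q'(s',r_p(s,s',y))$. The paper does not actually spell out a proof of this proposition (it is stated as a basic property with the proof deferred), but your argument is precisely the intended one and all the verification steps—computable enumerability of $U_c$, coverage, oracle discipline, correctness, and the transfer of the FPT time bound under $(\pi_1,\pi_2)$—are handled correctly.
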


\begin{theorem} \label{thm:transitivity}
Counting slice reducibility is transitive.
\end{theorem}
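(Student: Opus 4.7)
The plan is to compose the two given reductions in the natural way. Suppose $(U_1, r_1)$, realized by algorithm $A_1$, is a counting slice reduction from $Q[S]$ to $Q'[S']$, and $(U_2, r_2)$, realized by $A_2$, is one from $Q'[S']$ to $Q''[S'']$. I would define a reduction $(U, r)$ from $Q[S]$ to $Q''[S'']$ by letting $U$ consist of those pairs $(s, T) \in \Sigma^* \times \powfin(\Sigma^*)$ for which there exist a finite set $T_1 \in \powfin(\Sigma^*)$ and a function $\tau: T_1 \to \powfin(\Sigma^*)$ with $(s, T_1) \in U_1$, $(t, \tau(t)) \in U_2$ for every $t \in T_1$, and $\bigcup_{t \in T_1} \tau(t) \subseteq T$. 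Because $U_1$ and $U_2$ are both c.e., $U$ is c.e. by dovetailing over candidate witnesses $(T_1, \tau)$.

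For the coverage condition, given $s \in S$ I would first use coverage of the first reduction to obtain $T_1 \subseteq S'$ with $(s, T_1) \in U_1$, and then, for each $t \in T_1$, use coverage of the second reduction to obtain $\tau(t) \subseteq S''$ with $(t, \tau(t)) \in U_2$; the set $T := \bigcup_{t \in T_1} \tau(t)$ is then a finite subset of $S''$ satisfying $(s, T) \in U$. The algorithm $A$ realizing the composed reduction would, on input $(s, T, y)$ with $(s, T) \in U$, first enumerate $U_1$ and $U_2$ to recover some valid witness pair $(T_1, \tau)$ for $(s, T)$ (a procedure whose running time depends only on $(s, T)$), and then simulate $A_1$ on $(s, T_1, y)$. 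Whenever $A_1$ issues an oracle call $Q'(t, z)$ with $t \in T_1$, the algorithm answers it by running $A_2$ on input $(t, \tau(t), z)$; the queries $A_2$ issues have the form $Q''(u, w)$ with $u \in \tau(t) \subseteq T$, and so are legitimate oracle queries for $A$. By correctness of the second reduction these inner calls return $Q'(t, z)$, so $A_1$'s computation proceeds correctly, and by correctness of the first reduction $A$'s final output equals $r_1(s, T_1, y) = Q(s, y)$.

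The step I expect to require the most care is verifying the FPT time bound. Writing the running time of $r_i$ as $f_i \cdot p_i$ with $p_i$ a fixed polynomial, the length of each query $z$ issued by $A_1$ is at most $f_1(s, T_1) \cdot p_1(|y|)$, and the number of such queries is similarly bounded. Hence each inner invocation of $A_2$ costs at most $f_2(t, \tau(t)) \cdot p_2\bigl(f_1(s, T_1) \cdot p_1(|y|)\bigr)$. Since composing two polynomials yields a polynomial whose degree is determined only by the original degrees, the total running time has the form $g(s, T) \cdot q(|y|)$ where $q$ is a fixed polynomial determined only by $p_1$ and $p_2$, and all parameter-dependent factors (including the cost of witness enumeration and the factor $f_1 f_2$) absorb into the computable function $g$. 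This yields the required FPT bound and completes the argument.
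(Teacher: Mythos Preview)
Your proposal is correct and follows essentially the same approach as the paper's proof: compose the two reductions by defining $U$ via witness pairs $(T_1, \tau)$ and have the composite algorithm simulate $A_1$, answering its oracle queries by running $A_2$. The only cosmetic differences are that you require $\bigcup_{t\in T_1}\tau(t)\subseteq T$ whereas the paper uses equality, and you spell out the FPT time analysis explicitly while the paper defers it to an analogous argument in a cited reference; neither difference is substantive.
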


The next two theorems give the connection between case complexity and parameterized complexity. 
In particular, they
show that, from a counting slice reduction,
one can obtain 
complexity results for the corresponding
parameterized problems.

\begin{theorem}
\label{thm:slice-red-gives-fpt-red}
Let $Q[S]$ and $Q'[S']$ be case problems.
Suppose that $Q[S]$ counting slice reduces to $Q'[S']$,
and that both $S$ and $S'$ are computable.
Then $\param{Q[S]}$ counting FPT-reduces to $\param{Q'[S']}$.
\end{theorem}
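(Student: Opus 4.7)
The plan is to unpack both reduction notions and show that a counting slice reduction can be repackaged directly into a counting FPT-reduction, with the computability of $S$ and $S'$ providing the single extra ingredient needed to make things effective.

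Let $(U, r)$ together with algorithm $A$ witness the counting slice reduction from $Q[S]$ to $Q'[S']$. First I would define the function $h : \Sigma^* \to \powfin(\Sigma^*)$ required by the FPT-reduction. Given $s$, use the computability of $S$ to check whether $s \in S$; if not, set $h(s) := \emptyset$. Otherwise, enumerate $U$ (possible since it is computably enumerable) and, using computability of $S'$, search for the first pair of the form $(s, T)$ with $T \subseteq S'$; the coverage condition guarantees that such a $T$ exists, so the search halts and $h$ is a total computable function.

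Next I would describe the algorithm $A'$ witnessing the FPT-reduction. On input $(s, y)$ with parameter $s = \pi_1(s, y)$, $A'$ first checks whether $s \in S$; if not, it outputs $0$, which by definition equals the value of $\param{Q[S]}$ at such inputs. Otherwise $A'$ computes $T := h(s)$ and simulates $A$ on input $(s, T, y)$, forwarding each oracle query $Q'(t, z)$ made by $A$ (where $t \in T \subseteq S'$) as an oracle query to $\param{Q'[S']}$ on input $(t, z)$. The parameterized oracle returns $Q'(t, z)$ on such inputs, and the correctness clause of the slice reduction then yields that the final output equals $r(s, T, y) = Q(s, y)$. Each oracle query has parameter $t \in T = h(s) = h(\pi_1(s, y))$, matching the condition required by the FPT-reduction.

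For the running time, $A$ is FPT with respect to $(\pi_1, \pi_2)$, so its time on $(s, T, y)$ is bounded by $g(s, T) \cdot p(|(s, T, y)|)$ for some computable $g$ and polynomial $p$; substituting $T = h(s)$ makes this bound polynomial-multiplied in $|y|$ with parameter $s$, and adding the purely parameter-dependent cost of computing $h(s)$ preserves this shape. I do not expect any real obstacle here; the only subtle point is conceptual, namely that the coverage condition on its own is just an existential statement, and the computability of the two case sets is exactly what is needed to turn it into an effective procedure defining $h$.
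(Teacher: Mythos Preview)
Your proposal is correct and follows essentially the same approach as the paper's proof: construct a computable $h$ from the coverage condition (using the computable enumerability of $U$ and decidability of $S,S'$), then on input $(s,y)$ decide $s\in S$, output $0$ in the negative case, and otherwise run the slice-reduction algorithm on $(s,h(s),y)$, absorbing the $h(s)$-dependence into the parameter. If anything, your write-up is slightly more explicit than the paper's in spelling out why $h$ is total computable and why $h(s)\subseteq S'$ is needed so that the $\param{Q'[S']}$ oracle actually returns $Q'(t,z)$ rather than $0$.
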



\begin{theorem}
\label{thm:promfpt-to-fpt}
Let $Q[S]$ be a case problem, and let
$K: \Sigma^* \times \Sigma^* \to \N$ be a problem.
Suppose that $\paramprom{Q[S]}$ is in $\prom{\FPT}$,
$S$ is computably enumerable, and that
the case problem $K[\Sigma^*]$ counting slice reduces to $Q[S]$.
Then the parameterized problem $(K, \pi_1)$ is in $\FPT$.
\end{theorem}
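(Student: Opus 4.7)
The plan is to combine three ingredients given by the hypotheses: the FPT algorithm witnessing $\paramprom{Q[S]} \in \prom{\FPT}$, the computable enumerability of $S$, and the counting slice reduction $(U, r)$ from $K[\Sigma^*]$ to $Q[S]$. Fix a problem $P$ with $(P, \pi_1) \in \FPT$ such that $P(t, z) = Q(t, z)$ whenever $t \in S$, and let $B$ be an FPT-algorithm for $P$ running in time at most $g(t) \cdot q(|(t, z)|)$ for some computable $g$ and polynomial $q$. Let $A$ be an algorithm witnessing that $r$ is FPT-computable with respect to $(\pi_1, \pi_2)$.

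On input $(s, y)$ with parameter $s$, I first compute a finite set $T = T(s) \subseteq S$ such that $(s, T) \in U$. Coverage guarantees such a $T$ exists. To produce one, I dovetail enumerations of $U$ and of $S$: I enumerate candidate pairs $(s, T') \in U$ whose first coordinate is $s$, and in parallel enumerate $S$, accepting the first $T'$ whose (finitely many) elements have all appeared in the enumeration of $S$. Since some eligible $T$ is eventually listed in $U$ and its elements are eventually listed in $S$, the search terminates; its running time depends only on $s$.

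Next, I run $A$ on $(s, T, y)$. Whenever $A$ issues an oracle query $Q(t, z)$ with $t \in T$, I answer it by computing $P(t, z)$ using $B$; since $T \subseteq S$, we have $P(t, z) = Q(t, z)$, so the simulation returns exactly $r(s, T, y)$, which equals $K(s, y)$ by correctness. The running time of $A$ on $(s, T, y)$ is bounded by $h(s, T) \cdot p(|(s, T, y)|)$ for some computable $h$ and polynomial $p$; each oracle call takes time $g(t) \cdot q(|(t, z)|)$ with $t \in T$; and the bit length of each query argument is polynomially bounded in $|(s, T, y)|$. Because $T$ is a function of $s$ alone, both $h(s, T)$ and $\max_{t \in T} g(t)$ are bounded by computable functions of $s$. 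Combining the two phases, the total running time is $f^*(s) \cdot p^*(|(s, y)|)$ for some computable $f^*$ and polynomial $p^*$, so $(K, \pi_1) \in \FPT$.

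The main subtlety I anticipate is handling partial information: both $U$ and $S$ are only computably enumerable, so one cannot decide whether a given $T$ is contained in $S$. What rescues the argument is that we need only \emph{produce} some $(s, T) \in U$ with $T \subseteq S$, not certify nonexistence — and coverage guarantees such a $T$ exists, so the dovetailed search is certain to halt. A secondary point to check is that the parameterization $(\pi_1, \pi_2)$ of the slice reduction collapses cleanly into a parameterization by $s$ alone; this holds because $T$ is determined by $s$ in the first phase, so any function of $(s, T)$ becomes a function of $s$.
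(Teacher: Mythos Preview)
Your proof is correct and follows essentially the same approach as the paper's: compute a witness set $T(s)\subseteq S$ with $(s,T(s))\in U$ from the enumerations of $U$ and $S$, then simulate the reduction algorithm $A$ on $(s,T(s),y)$ using the promise-$\FPT$ algorithm $B$ to answer oracle queries, and observe that the time bounds collapse to functions of $s$ alone because $T$ depends only on $s$. If anything, you are more explicit than the paper about the dovetailing step and about why the $(\pi_1,\pi_2)$ parameterization of $r$ becomes a $\pi_1$ parameterization; the paper merely asserts the existence of the computable function $s\mapsto h(s)$ and defers the final timing argument to its proof of the preceding theorem.
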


In the remainder of the paper, we will show all our reductions in the case complexity framework and then use Theorem~\ref{thm:slice-red-gives-fpt-red} and Theorem~\ref{thm:promfpt-to-fpt} to derive parameterized complexity results. This approach lets us give results on $\sCQ[\calC]$ for different complexity assumptions on $\calC$ without having to deal with these assumptions in the proofs. Thus we separate the technicalities of the reductions from the assumptions on $\calC$ which in our opinion gives a far clearer presentation.


%
%
%

\section{Statement of the main results}

In this section we present the main results of this paper which we will then prove in the remainder of the paper. For the statement of the results we will use certain $S$-hypergraphs that we get as a contraction of the $S$-hypergraphs of conjunctive queries. When deleting a vertex $v$ from a hypergraph, we delete $v$ from the vertex set and all edges it appears in but keep all edges, unless they become empty after the deletion of $v$.

\begin{definition}
\label{def:contract}
To every $S$-hypergraph $(\calH,S)$ we define an $S$-hypergraph $\contract(\calH,S)$ as follows: 
We add an edge $\{u,v\}$
 for any pair of vertices $u,v$ that appears in a common $S$-component of $\calH$. Then we delete the vertices in $V(\calH)\setminus S$. To a class $\calG$ of $S$-hypergraphs we define $\contract(\calG):= \{\contract(\calH,S)\mid (\calH,S) \in \calG\}$.

For a conjunctive query $(\relA, S)$ let $\contract(\relA, S)$ be $\contract(\calH,S)$ where $(\calH,S)$ is the $S$-hypergraph of the core of $(\relA, S)$. For a class $\calC$ of conjunctive queries, set $\contract(\calC):=\{ \contract(\relA, S)\mid (\relA, S)\in \calC\}$.
\end{definition}

We first present a version of our main result using the framework of case complexity.

\begin{theorem}
\label{thm:trichotomy}
Let $\calC$ be a class of conjunctive queries. 
\begin{enumerate}
 
\item If the cores of $\calC$ and $\contract(\calC)$ are of bounded treewidth, then 
$\prom{\sCQ[\calC]}\in \prom{\p}$, and hence
$\param{\prom{\sCQ[\calC]}}  \in   \prom{\FPT}$.

 \item If the cores of $\calC$ are of unbounded treewidth but $\contract(\calC)$ is of bounded treewidth, then $\sCQ[\calC]$ is equivalent to $\clique[\mathbb{N}]$ with respect to counting slice reductions.

 \item If the treewidth of $\contract(\calC)$ is unbounded, then there is a counting slice reduction from $\sclique[\mathbb{N}]$ to $\sCQ[\calC]$.

\end{enumerate}
\end{theorem}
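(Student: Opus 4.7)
The plan is to route everything through the Durand--Mengel classification for $\sCQh$ (Theorem~\ref{thm:bounded}) by passing from a query $(\relA, S)$ to the $S$-hypergraph of its core; by Theorem~\ref{thm:ChandraMerlin} this preserves homomorphism counts. A key preliminary observation is that each $S$-component contributes a clique on $V_C \cap S$ in $\contract(\calH, S)$, so $|V_C \cap S| \leq \tw(\contract(\calH, S)) + 1$; in particular, the $S$-star size of $(\calH, S)$ is at most $\tw(\contract(\calH, S)) + 1$. This immediately yields part~(1): the hypotheses on core treewidth and on $\tw(\contract(\calC))$ together place the class of $S$-hypergraphs of cores of $\calC$ into the tractable regime of Theorem~\ref{thm:bounded}, and a parsimonious slice reduction from $\prom{\sCQ[\calC]}$ to the corresponding restriction of $\prom{\sCQh}$ via core replacement transfers tractability.

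For part~(2), counting slice reductions are needed in both directions. The reduction $\clique[\mathbb{N}] \leq \sCQ[\calC]$ uses Grohe's Theorem~\ref{thm:grohe2}: unbounded treewidth of the cores of $\calC$ yields a slice reduction from $\clique$ to $\CQ$ on the augmented cores, which lifts to $\sCQ[\calC]$ by testing whether the returned count is nonzero. The reverse reduction $\sCQ[\calC] \leq \clique[\mathbb{N}]$ uses bounded $\tw(\contract(\calC))$: since $V_C \cap S$ is a clique in $\contract$ it fits inside a bag of any tree decomposition, so $|V_C \cap S|$ is bounded by a constant. For each $S$-component $C$ and each partial assignment $f \colon V_C \cap S \to B$ (polynomially many choices), a single $\CQ$ (hence $\clique$) query decides whether $f$ extends to a homomorphism on the component, and a tree-decomposition dynamic program on $\contract(\relA, S)$ aggregates these per-component decisions into $|\hom(\relA, \relB, S)|$ using polynomially many oracle calls.

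Part~(3) seeks a counting slice reduction from $\sclique[\mathbb{N}]$ to $\sCQ[\calC]$. I would establish the combinatorial converse to the earlier bound: bounded treewidth of the core $S$-hypergraphs together with bounded $S$-star size implies bounded $\tw(\contract(\calC))$. The argument uses that bounded hypergraph treewidth in the bounded-arity setting bounds the clique number of the primal graph of $\calH$; combined with bounded $S$-star size, a Ramsey-type argument bounds $|V_C \cap S|$, and a tree decomposition of $\calH$ can be inflated to accommodate the added cliques of bounded size, yielding bounded $\tw(\contract)$. Taking the contrapositive, unbounded $\tw(\contract(\calC))$ forces the $S$-hypergraphs of the cores of $\calC$ to violate condition~3 of Theorem~\ref{thm:bounded}, so the hard direction of that theorem gives a counting slice reduction from $\sclique[\mathbb{N}]$ to $\sCQh$ on this class, which lifts to $\sCQ[\calC]$ by core replacement. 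The main obstacles are the reduction $\sCQ[\calC] \leq \clique[\mathbb{N}]$ in part~(2)---counting with only a decision oracle is made possible precisely because boundedness of $|V_C \cap S|$ turns per-component extendibility into a polynomial-size check table that feeds cleanly into the tree DP on $\contract$---and the Ramsey-plus-tree-inflation combinatorial claim in part~(3), where one must argue that the global treewidth of the contracted graph is controlled by local invariants of $\calH$.
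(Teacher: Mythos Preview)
Your plan for part~(1) and for the direction $\sCQ[\calC] \leq \clique[\mathbb{N}]$ in part~(2) matches the paper's (Corollary~\ref{cor:countingalg} and Lemma~\ref{lem:decision}). There are, however, two genuine gaps.

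\textbf{The lift from hypergraphs to $\sCQ[\calC]$ is not ``core replacement''.} In both part~(2) (direction $\clique \leq \sCQ[\calC]$) and part~(3) you need a reduction from $\sCQh[\calG]$ (where $\calG$ is the class of $S$-hypergraphs of the cores of $\calC$) to $\sCQ[\calC]$. Replacing a query by its core, together with Theorem~\ref{thm:ChandraMerlin}, goes the \emph{other} way. An instance of $\sCQh[\calG]$ carries a structure $(\relA',S)$ that merely shares a hypergraph with some core $(\hat{\relA},S)$; its relations may be completely different. Your specific suggestion for part~(2), to invoke Grohe on the augmented cores and then test ``count $>0$'', also fails: $|\hom(\relA,\relB,S)|>0$ is just $\mathrm{Hom}(\relA,\relB)\neq\emptyset$, which depends on the \emph{ordinary} core of $\relA$, whereas the hypothesis bounds the treewidth of the core of $\aug{\relA,S}$. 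For a bipartite $n\times n$ grid with $S=A$, the ordinary core is a single edge while the augmented core is the whole grid. What the paper does instead is prove that one can simulate arbitrary unary relations $R_a$ on the variables of a query whose augmented structure is a core (Lemma~\ref{lem:constants}, via inclusion--exclusion and Vandermonde interpolation); this is the technical heart of Corollary~\ref{cor:graphstostructures} and is precisely what you are missing.

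\textbf{Theorem~\ref{thm:bounded} does not give $\sclique$-hardness.} Your Ramsey/tree-inflation claim (bounded core treewidth and bounded $S$-star size imply bounded $\tw(\contract)$) is correct, and its contrapositive does force unbounded core treewidth or unbounded $S$-star size. But the hard direction of Theorem~\ref{thm:bounded}, as stated, is only $\W{1}$-hardness: for the ``unbounded treewidth, bounded $S$-star size'' branch it comes from Lemma~\ref{lem:DMreduction}, which gives a reduction from $\clique$, not $\sclique$. Concretely, subdividing every edge of an $n\times n$ grid and putting the original grid vertices in $S$ yields $S$-star size $2$, unbounded core treewidth, and $\contract$ equal to the grid; your argument would stall at $\clique$-hardness here. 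The paper sidesteps this by splitting on \emph{strict} $S$-star size (Lemma~\ref{lem:contracthard}): if it is bounded, one parsimoniously reduces $\sCQh[\contract(\calG)]$ to $\sCQh[\calG]$, and since $\contract(\calG)$ consists of quantifier-free $S$-hypergraphs with unbounded treewidth, Dalmau--Jonsson gives the $\sclique$ reduction directly.
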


We will prove Theorem~\ref{thm:trichotomy} in Section \ref{sct:puttingtogether}.
Using the results on case complexity, we derive from Theorem~\ref{thm:trichotomy} two versions of the trichotomy
phrased in terms of promise problems and 
of non-promise problems, depending on whether or not
the class $\calC$ of conjunctive queries is assumed
to be recursively enumerable or computable, respectively.

\begin{theorem}
\label{thm:trichotomy-re}
Let $\calC$ be a class of conjunctive queries which is
recursively enumerable.
In the scope of this theorem, let us say that the class $\calC$
is \emph{tractable} if 
$\lpr \sCQ, \calC \times \Sigma^* \rpr \in \prom{\p}$ and
$\lpr \smallp\sCQ, \calC \times \Sigma^* \rpr \in \prom{\FPT}$.

\begin{enumerate}

\item If the cores of $\calC$ and $\contract(\calC)$
have bounded treewidth, then 
$\calC$ is tractable.

\item If the cores of $\calC$ have unbounded treewidth,
then $\calC$ is not tractable, unless $\pclique$ is in FPT
(and hence $\FPT = \W{1}$).

\item If $\contract(\calC)$ has unbounded treewidth,
then $\calC$ is not tractable, unless $\psclique$ is in FPT
(and hence $\FPT = \sW{1}$).

\end{enumerate}
\end{theorem}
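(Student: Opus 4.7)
The plan is to derive Theorem~\ref{thm:trichotomy-re} from Theorem~\ref{thm:trichotomy} in a largely bookkeeping fashion, by unpacking the definitions of $\prom{\sCQ[\calC]}$ and $\paramprom{\sCQ[\calC]}$ and invoking Theorem~\ref{thm:promfpt-to-fpt} to convert the counting slice reductions from $\clique[\N]$ and $\sclique[\N]$ into conditional lower bounds on the parameterized counterparts. Throughout I will use that $\prom{\sCQ[\calC]}$ is by definition the promise problem $\lpr \sCQ, \calC \times \Sigma^* \rpr$ and that $\paramprom{\sCQ[\calC]}$ is the parameterized promise problem $\lpr \smallp\sCQ, \calC \times \Sigma^* \rpr$, so the notion of ``tractability'' in this theorem is literally the conjunction of $\prom{\sCQ[\calC]} \in \prom{\p}$ and $\paramprom{\sCQ[\calC]} \in \prom{\FPT}$.

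For part~(1), I would simply quote part~(1) of Theorem~\ref{thm:trichotomy}: under the stated bounded-treewidth hypotheses it gives $\prom{\sCQ[\calC]} \in \prom{\p}$ and $\paramprom{\sCQ[\calC]} \in \prom{\FPT}$, which by the identification above is exactly tractability in the sense of the present theorem. No hypothesis of recursive enumerability on $\calC$ is needed for this direction.

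For part~(3), I would apply part~(3) of Theorem~\ref{thm:trichotomy} to obtain a counting slice reduction from $\sclique[\N]$ to $\sCQ[\calC]$. Assuming $\calC$ is tractable, in particular $\paramprom{\sCQ[\calC]} \in \prom{\FPT}$, and using that $\calC$ is recursively enumerable, I then invoke Theorem~\ref{thm:promfpt-to-fpt} with $K := \sclique$ and $Q[S] := \sCQ[\calC]$; this yields $(\sclique, \pi_1) = \psclique \in \FPT$, as desired. Part~(2) is handled similarly but with a case split on whether $\contract(\calC)$ has bounded treewidth. If it does not, part~(3) of the present theorem already gives $\psclique \in \FPT$, which trivially implies $\pclique \in \FPT$. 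If it does, then the assumption that cores are of unbounded treewidth puts us in case~(2) of Theorem~\ref{thm:trichotomy}, so $\sCQ[\calC]$ is equivalent to $\clique[\N]$ under counting slice reductions; in particular $\clique[\N]$ counting slice reduces to $\sCQ[\calC]$, and Theorem~\ref{thm:promfpt-to-fpt} applied with $K := \clique$ delivers $\pclique \in \FPT$.

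There is essentially no hard step; the work is all in keeping the definitions straight and checking the hypotheses of Theorem~\ref{thm:promfpt-to-fpt} (recursive enumerability of $\calC$, which is given, and that the underlying problem $K$ is a total problem on $\Sigma^* \times \Sigma^*$, which holds for $\clique$ and $\sclique$ under the paper's convention identifying $\N$ with $\Sigma^*$). The only minor subtlety is justifying part~(2) via case~(3) of Theorem~\ref{thm:trichotomy}: this uses that $\clique$ trivially counting slice reduces to $\sclique$ (ask the $\sclique$ oracle and threshold the answer at zero), combined with transitivity of counting slice reducibility (Theorem~\ref{thm:transitivity}); I would state this once at the start as a short remark so that the two subcases of part~(2) can be handled uniformly.
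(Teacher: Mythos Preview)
Your proposal is correct and follows essentially the same route as the paper: the paper's proof reads in its entirety ``(1) follows directly from item~(1) of Theorem~\ref{thm:trichotomy}. (2) and~(3) follow directly from the respective items of Theorem~\ref{thm:trichotomy} and Theorem~\ref{thm:promfpt-to-fpt}.'' Your explicit case split in part~(2) is a careful unpacking of what the paper leaves implicit, since item~(2) of Theorem~\ref{thm:trichotomy} carries the extra hypothesis that $\contract(\calC)$ has bounded treewidth; your handling of the other subcase via item~(3) and the trivial reduction from $\clique$ to $\sclique$ is exactly the right way to fill that gap.
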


\begin{proof}
(1) follows directly from item (1) of Theorem~\ref{thm:trichotomy}.
(2) and (3)
follow directly from the respective items of 
Theorem~\ref{thm:trichotomy}
and Theorem~\ref{thm:promfpt-to-fpt}.
\end{proof}

From Theorem~\ref{thm:trichotomy-re},
one can immediately derive, as 
corollaries, Theorem~\ref{thm:grohe2}
and Theorem~\ref{thm:dalmauJ}.

\newcommand{\res}{\upharpoonright}

Let us use $(\smallp\sCQ \res I)$ 
to denote the parameterized problem which 
is equal
to $\sCQ$ on $I$, and is equal to $0$ elsewhere
(and has the
parameterization of $\smallp\sCQ$).

\begin{theorem}
\label{thm:trichotomy-comp}
Let $\calC$ be a class of conjunctive queries which is
computable.
\begin{enumerate}

\item  If the cores of $\calC$ and $\contract(\calC)$
have bounded treewidth, then 
$(\smallp\sCQ \res \calC \times \Sigma^*)$ is in FPT.

\item If the cores of $\calC$ have unbounded treewidth,
and $\contract(\calC)$ has bounded treewidth, 
then
$(\smallp\sCQ \res \calC \times \Sigma^*)$ is equivalent to 
$\pclique$ under counting FPT-reduction.

\item If $\contract(\calC)$ has unbounded treewidth,
$(\smallp\sCQ \res \calC \times \Sigma^*)$ admits
a counting FPT-reduction from $\psclique$.

\end{enumerate}
\end{theorem}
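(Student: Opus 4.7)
The plan is to deduce Theorem~\ref{thm:trichotomy-comp} directly from Theorem~\ref{thm:trichotomy} using the case complexity machinery of Section~\ref{sct:casecomplexity}, in perfect parallel with the proof of Theorem~\ref{thm:trichotomy-re}; the only extra ingredient that must be exploited here is computability (rather than mere recursive enumerability) of $\calC$, which lets us pass from promise-style parameterized statements to genuine (non-promise) FPT-style ones. The key bookkeeping observation is that, under the standing identification of $\Sigma^*$ with $\N$, the parameterized problem $\param{\sCQ[\calC]}$ coincides with $(\smallp\sCQ \res \calC \times \Sigma^*)$, while $\param{\clique[\mathbb{N}]}$ and $\param{\sclique[\mathbb{N}]}$ coincide with $\pclique$ and $\psclique$, respectively.

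For item~(1), I would first invoke Theorem~\ref{thm:trichotomy}(1) to obtain $\paramprom{\sCQ[\calC]} \in \prom{\FPT}$. I would then exhibit a trivial counting slice reduction from $K[\Sigma^*]$ to $\sCQ[\calC]$, where $K(q,y) = \sCQ(q,y)$ when $q \in \calC$ and $K(q,y) = 0$ otherwise: set $U = \{(s,\{s\}) : s \in \calC\} \cup \{(s,\emptyset) : s \notin \calC\}$, which is computably enumerable because $\calC$ is computable, and let the reduction algorithm forward the oracle call $\sCQ(s,y)$ when $s \in \calC$ and output $0$ without any query otherwise. Since $\calC$ is computable (hence c.e.), Theorem~\ref{thm:promfpt-to-fpt} then delivers $(\smallp\sCQ \res \calC \times \Sigma^*) \in \FPT$, which is exactly item~(1).

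For items~(2) and~(3), I would feed the counting slice reductions supplied by Theorem~\ref{thm:trichotomy}(2) and~(3) into Theorem~\ref{thm:slice-red-gives-fpt-red}; both $\calC$ and $\N$ are computable, so the hypotheses of that theorem are satisfied, and we obtain the desired counting FPT-reductions between $\param{\sCQ[\calC]}$, $\param{\clique[\mathbb{N}]}$, and $\param{\sclique[\mathbb{N}]}$. Translating via the identifications above then yields the equivalence with $\pclique$ in item~(2) and the counting FPT-reduction from $\psclique$ in item~(3). There is no real obstacle beyond this: the proof is essentially a routine application of the case complexity framework, and the one mildly delicate point, the zero-output branch of the reduction in item~(1), is conveniently handled by invoking the full counting slice reduction form (rather than the parsimonious variant), so that the case $q \notin \calC$ is absorbed by returning $0$ with no oracle call at all.
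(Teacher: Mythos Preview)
Your proposal is correct, and items~(2) and~(3) are handled exactly as in the paper: both invoke Theorem~\ref{thm:slice-red-gives-fpt-red} on the slice reductions furnished by Theorem~\ref{thm:trichotomy}, using computability of $\calC$ and of $\mathbb{N}$.

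For item~(1) you take a slightly more elaborate route than the paper. The paper argues directly: given $(\phi,\relB)$, decide whether $\phi\in\calC$ (possible since $\calC$ is computable), and if so run the polynomial-time algorithm from Theorem~\ref{thm:trichotomy}(1), else output $0$. You instead package this same idea as a counting slice reduction from $K[\Sigma^*]$ to $\sCQ[\calC]$ and then invoke Theorem~\ref{thm:promfpt-to-fpt}. Your construction is sound (the empty oracle set for $s\notin\calC$ is a nice touch), but it amounts to feeding the direct argument back through the general machinery; the paper's version is shorter and makes transparent exactly where computability of $\calC$ is used.
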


\begin{proof}
For (1), the FPT algorithm is to first decide, given an instance $(\phi, \relB)$,
whether or not $\phi \in \calC$; if so, the algorithm invokes
the algorithm of Theorem~\ref{thm:trichotomy},
otherwise, it returns $0$.
(2) and (3) follow immediately from Theorem~\ref{thm:trichotomy}
and Theorem~\ref{thm:slice-red-gives-fpt-red}.
\end{proof}

\section{Positive complexity results}\label{sct:algorithms}

In this section,
we will prove a counting version of Theorem~\ref{thm:DalmauKV}. 
We will use a lemma that is probably well known, but as we could not find a reference, we give a proof for it\iftoggle{ARXIV}{}{ in Appendix~\ref{app:computecores}}.

\begin{lemma}\label{lem:computecores}
Let $k\in \mathbb{N}$ be a fixed constant. There exists a polynomial-time algorithm that, given a structure $\relA$ whose core has treewidth at most $k$, outputs a core of $\relA$.
\end{lemma}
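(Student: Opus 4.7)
The strategy is to compute a core of $\relA$ by an iterative pruning procedure that, remarkably, only needs to decide the existence of certain homomorphisms and never has to construct them explicitly. I would initialize $\relA' := \relA$ and then repeat the following step. For each element $a$ of the current domain $A'$, invoke the algorithm of Theorem~\ref{thm:DalmauKV} on the instance $(\relA', \relA'[A' \setminus \{a\}])$ of $\CQ$; if for some $a$ a witnessing homomorphism exists, replace $\relA'$ by the induced substructure $\relA'[A' \setminus \{a\}]$ and iterate. When no such $a$ can be found, output $\relA'$.

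For correctness, I would first observe that each iteration preserves homomorphic equivalence with the original $\relA$: if a homomorphism $\relA' \to \relA'[A' \setminus \{a\}]$ witnesses the update, then together with the inclusion $\relA'[A' \setminus \{a\}] \hookrightarrow \relA'$ it exhibits the two as homomorphically equivalent, and homomorphic equivalence is transitive. At termination the output $\relA^*$ satisfies: for every $a \in A^*$, no homomorphism from $\relA^*$ to $\relA^*[A^* \setminus \{a\}]$ exists. This is precisely the condition for $\relA^*$ to be a core, because any non-surjective endomorphism of $\relA^*$ would supply such a homomorphism by picking $a$ outside its image. Combined with homomorphic equivalence to $\relA$ and Lemma~\ref{lem:coreunique}, this shows that $\relA^*$ is a core of $\relA$.

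For the running time, each outer iteration strictly shrinks the domain, so at most $|A|$ iterations occur, each performing at most $|A|$ decision calls. Every such call runs in polynomial time because at every stage $\relA'$ is homomorphically equivalent to the input $\relA$, so by Lemma~\ref{lem:ismorphicCores} its core is isomorphic to the core of $\relA$ and has treewidth at most $k$; therefore $\relA'$ belongs to the class $\calC_k$ of Theorem~\ref{thm:DalmauKV}, and the decision algorithm runs in time polynomial in the size of the instance.

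The obstacle one might naively expect to encounter is a search-to-decision reduction: given that a homomorphism into a proper substructure exists, how should one extract one explicitly without risking destruction of the bounded-core-treewidth property, for instance through augmentation with unary singletons pinning individual elements? The key observation that sidesteps this issue entirely is that mere existence is enough, since $\relA'[A' \setminus \{a\}]$ is by the argument above already homomorphically equivalent to $\relA'$; hence simply dropping $a$ is a valid shrinking step, and no explicit witness homomorphism is ever required.
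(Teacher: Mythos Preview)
Your proof is correct and follows essentially the same approach as the paper: iterative minimization, maintaining homomorphic equivalence with $\relA$ as an invariant so that Lemma~\ref{lem:ismorphicCores} keeps the core-treewidth bound, and using Theorem~\ref{thm:DalmauKV} as the polynomial-time decision oracle. The only difference is cosmetic: the paper prunes by deleting individual \emph{tuples} from relations, whereas you prune by deleting \emph{domain elements} and passing to the induced substructure; both terminate at a core for the same reason, and the running-time analyses are parallel.
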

\def\prooflemcomputecores{
 The proof is based on well-known query minimization techniques already pioneered in \cite{ChandraM1977}. The basic observation is that if~$\relA$ is not a core, then there is a substructure $\relA_s$, that we get by deleting a tuple from a relation of $\relA$, that contains a core of $\relA$. Trivially,~$\relA_s$ is homomorphically equivalent to $\relA$. Thus by Lemma \ref{lem:ismorphicCores}, for every substructure $\relA_s$ of $\relA$ that is homomorphically equivalent to $\relA$, the core of $\relA_s$ is also a core of $\relA$.
 
 The construction of a core $\relA_c$ goes as follows: For every tuple $t$ in every relation check, using the algorithm of Theorem~\ref{thm:DalmauKV}, if the structure we get from $\relA$ by deleting $t$ is homomorphically equivalent to $\relA$. If there is such a tuple $t$, delete it from $\relA$ and iterate the process, until no tuple can be deleted anymore. 
 
 By the discussion above the end result $\relA_c$ of this procedure must be a core. Furthermore, $\relA_c$ is homomorphically equivalent to $\relA$ and a substructure of $\relA$, so $\relA_c$ is a core of $\relA$. Finally, at most $\|\relA\|$ tuples get deleted and for every deleted tuple the algorithm has to perform at most $\|\relA\|$ homomorphism tests. The left hand sides of these tests all have the same core of treewidth at most $k$ and the right hand sides have size at most $\|\relA\|$. Using Theorem \ref{thm:DalmauKV} then gives a runtime polynomial in $\|\relA\|$.
\end{proof}
}
\iftoggle{ARXIV}{
\begin{proof}
\prooflemcomputecores
}{}
Lemma~\ref{lem:computecores} yields a counting version of Theorem~\ref{thm:DalmauKV} as an easy corollary.

\begin{corollary}\label{cor:countingalg}
 Let $\calC$ be a class of conjunctive queries such that the cores of the queries in~$\calC$ have bounded quantified star size and bounded treewidth. Then $\prom{\sCQ[\calC]}\in \prom{\p}$.
\end{corollary}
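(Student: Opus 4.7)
The plan is to combine \textbf{Lemma \ref{lem:computecores}} (compute the core in polynomial time) with \textbf{Theorem \ref{thm:bounded}} (solve $\sCQh$ when the $S$-hypergraph has bounded treewidth and bounded $S$-star size). Let $c$ and $c'$ be constants that bound, respectively, the treewidth and the quantified star size of the cores of the queries in $\calC$.

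Given an input $((\relA, S), \relB)$ with $(\relA, S) \in \calC$, first form the augmented structure $\aug{\relA, S}$. By hypothesis, its core has treewidth at most $c$, so Lemma \ref{lem:computecores} (applied with bound $c$) computes such a core $\hatrelA$ in polynomial time. Because each auxiliary unary relation $R_a^{\aug{\relA,S}} = \{a\}$ must be preserved by any homomorphism from $\aug{\relA,S}$ to $\hatrelA$ and by any back-homomorphism, $S$ is contained in the domain of $\hatrelA$ and $R_a^{\hatrelA} = \{a\}$ for every $a \in S$. Stripping the auxiliary unary relations yields a conjunctive query $(\hatrelA', S)$ over the original vocabulary $\tau$.

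By Theorem \ref{thm:ChandraMerlin}, the queries $(\relA, S)$ and $(\hatrelA', S)$ are equivalent, so
\[|\hom(\relA,\relB,S)| = |\hom(\hatrelA',\relB,S)|.\]
The $S$-hypergraph of $(\hatrelA', S)$ differs from that of $\hatrelA$ only by possibly missing singleton hyperedges supported on elements of $S$; since such singleton edges affect neither treewidth nor the $S$-components that define $S$-star size, the $S$-hypergraph of $(\hatrelA', S)$ still has treewidth at most $c$ and $S$-star size at most $c'$. Let $\calG$ be the class of all $S$-hypergraphs satisfying these two bounds. This class is decidable, hence recursively enumerable, and of bounded arity, so the implication (3) $\Rightarrow$ (1) of Theorem \ref{thm:bounded} yields a polynomial-time algorithm for $\lpr \sCQh, \calG \times \Sigma^* \rpr$. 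Feeding it the $S$-hypergraph of $(\hatrelA', S)$ together with $((\hatrelA', S), \relB)$ computes the desired count.

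The main obstacle is purely bookkeeping around the vocabulary switch: one must verify that the core of the augmented structure, once its auxiliary unary relations are dropped, really is a well-defined query over $\tau$ with free variable set exactly $S$, equivalent to the original query, and whose $S$-hypergraph inherits the two structural bounds from the core. Once this correspondence is established, the corollary is essentially immediate from Lemma \ref{lem:computecores} and Theorem \ref{thm:bounded}.
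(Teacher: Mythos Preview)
Your proof is correct and follows essentially the same route as the paper: compute the core of the augmented structure via Lemma~\ref{lem:computecores}, strip the auxiliary unary relations, invoke Theorem~\ref{thm:ChandraMerlin} for equivalence, and finish with Theorem~\ref{thm:bounded}. You supply more detail than the paper does (e.g., why $S$ survives in the core's domain and why dropping the singleton edges preserves the two structural bounds), but the argument is the same.
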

\begin{proof}
Let $((\relA,S), \relB)$ be an instance of $\prom{\sCQ[\calC]}$ with domain $A$. By the promise, there is a constant $c$ such that the treewidth and the quantified star size of the core of $(\relA,S)$ are at most $c$. We simply compute the core of $(\relA,S)$ with Lemma \ref{lem:computecores} and delete from it the relations $R_a$ introduced when constructing $\aug{\relA}$. Call the resulting query $(\hatrelA, S)$. By construction, $(\relA, S)$ and $(\hatrelA, S)$ have the same core, so by Theorem \ref{thm:ChandraMerlin} are equivalent. Moreover, the treewidth and quantified star size of $(\hatrelA,S)$ are bounded by $c$ and thus Theorem \ref{thm:bounded} lets us solve the instance in polynomial time.
\end{proof}

Let us discuss Corollary~\ref{cor:countingalg}. 

Theorem~\ref{thm:DalmauKV} and thus also Lemma~\ref{lem:computecores} crucially depends on the fact that we know by an outside promise that the treewidth of the cores we consider is bounded. If this bound is not satisfied, then the algorithm of Theorem~\ref{thm:DalmauKV} may give false positive results. Consequently, the algorithm of Lemma~\ref{lem:computecores} may compute a structure that is in fact \emph{not} the core of the input and then the algorithm of Corollary~\ref{cor:countingalg} gives the wrong count. Unfortunately, deciding if the core of a conjunctive query has treewidth at most $k$ is $\NP$-complete~\cite{DalmauKV2002} and even the problem of deciding if a fixed structure is the core of a given structure is $\NP$-complete. Thus there is no efficient way of realizing that the core computed by the algorithm of Lemma~\ref{lem:computecores} is wrong.

Consequently, while the result of Corollary~\ref{cor:countingalg} is very nice from a theoretical point of view (we will see in the next section that it is in fact optimal), it is probably of limited value from a more practical perspective. We see this as evidence that in fact parameterized complexity is a framework better suited for the type of problem discussed in this paper. 
Note that in this more relaxed setting of parameterized complexity, 
computing the core of a query by brute force can easily be done in the allowed time, because the core depends only on the query which is the parameter.

\newcommand{\all}{\mathsf{ALL}}

We now present a counting algorithm for $\sCQ[\calC]$ for certain classes $\calC$ that has oracle access to $\CQ$, the decision version of $\sCQ$. Let $\all$ be the class of all conjunctive queries.

\begin{lemma}\label{lem:decision}
 Let $\calC$ be a class of queries such that the treewidth of the $S$-hypergraphs in $\contract(\calC)$ is bounded by a constant $c$. Then there is a counting slice reduction from $\sCQ[\calC]$ to $\CQ[\all]$.
\end{lemma}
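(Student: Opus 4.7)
The plan is to use the $\CQ$ oracle to precompute, for each $S$-component of the query, the set of legal free-variable assignments; these ``component tables'' then assemble into a quantifier-free, bounded-treewidth counting instance that can be solved directly, without further oracle queries.

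First, given an input query $s = (\relA, S) \in \calC$, I would compute the core of $\aug{\relA, S}$ by brute-force enumeration over its substructures (feasible in FPT time in the parameter, since the core depends only on the query), then strip the singleton relations $R_a$ to obtain an equivalent query $(\hatrelA, S)$ by Theorem~\ref{thm:ChandraMerlin}. Write $\hat{\calH}$ for the hypergraph of $\hatrelA$. The hypothesis together with the observation that the singleton hyperedges $\{a\}$ do not affect any $S$-component or change treewidth tells us that $\contract(\hat{\calH}, S)$ has treewidth at most $c$. Since by the definition of $\contract$ the set $S_i := V_{C_i} \cap S$ forms a clique in $\contract(\hat{\calH}, S)$ for every $S$-component $\hat{\calH}[V_{C_i}]$, we obtain the crucial inequality $|S_i| \le c+1$.

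Second, for each $S$-component~$i$, let $\hatrelA_i$ be the substructure of $\hatrelA$ on domain $V_{C_i}$ whose relations consist of those tuples of $\hatrelA$ with domains lying inside $V_{C_i}$, and form a Boolean conjunctive query $\phi_i$ by adjoining to $\hatrelA_i$ a unary singleton relation symbol $R_v$ for each $v \in S_i$ and existentially quantifying all variables in $V_{C_i}$. The finite set $T := \{\phi_1, \dots, \phi_k\}$ of Boolean queries will serve as the set of target slices of the slice reduction. Given the input database $\relB$, for each tuple $\vec{b} \in B^{|S_i|}$ (at most $|B|^{c+1}$ per component, hence polynomially many in total) I would form the augmentation $\relB^{\vec{b}}$ of $\relB$ by setting $R_v^{\relB^{\vec{b}}} := \{b_v\}$ and issue a single oracle call $\CQ(\phi_i, \relB^{\vec{b}})$ to test whether the corresponding assignment $h_i : S_i \to B$ extends to a homomorphism $\hatrelA_i \to \relB$. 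Collecting the accepted tuples yields the table $P_i \subseteq B^{|S_i|}$.

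Since distinct $S$-components of $\hatrelA$ intersect only in $S$-vertices, an assignment $h : S \to B$ extends to a homomorphism $\hatrelA \to \relB$ if and only if $h$ satisfies every atom of $\hatrelA$ whose variables all lie in $S$ and $h|_{S_i} \in P_i$ for every~$i$. This re-expresses the count as $|\hom(\relA^*, \relB^*, S)|$, where $\relA^*$ has domain $S$ together with an atom $R_i(\vec v_i)$ listing the variables of $S_i$ for each~$i$ plus the ``$S$-only'' atoms of $\hatrelA$, and $\relB^*$ has domain $B$ with $R_i^{\relB^*} := P_i$ and inherits the vocabulary $\tau$-relations from $\relB$. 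The query $(\relA^*, S)$ is quantifier-free, and the primal graph of its hypergraph is a subgraph of the primal graph of $\contract(\hat{\calH}, S)$, so its treewidth is at most~$c$; hence, by the (trivial) quantifier-free specialization of Theorem~\ref{thm:bounded}, the count can be computed in polynomial time. Formally, the slice reduction then takes $U$ to be the graph of the FPT-computable map sending $(\relA, S)$ to the set $T$ just described, with $r$ the procedure above; coverage and correctness are immediate. The main obstacle I anticipate is the correctness of the product decomposition along $S$-components, which rests on the vertex-disjointness of the connected components of $\hat{\calH}[V \setminus S]$; the $|S_i|\le c+1$ bound and the treewidth bound on $(\relA^*,S)$ are then bookkeeping on top of this.
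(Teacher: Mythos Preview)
Your proposal is correct and follows essentially the same strategy as the paper: compute the core, use the $\CQ$ oracle to materialize, for each $S$-component, the table of legal assignments to its at most $c+1$ free variables, and then count answers to the resulting bounded-treewidth instance via Theorem~\ref{thm:bounded}. The only notable difference is in the encoding of the final instance: you place a single relation of arity $|S_i|\le c+1$ directly on the free variables of each $S$-component and obtain a quantifier-free query whose primal graph sits inside that of $\contract(\hat\calH,S)$, whereas the paper introduces one fresh quantified variable $a_C$ per component with binary relations $R_{C,a}$ linking it to the free variables (yielding treewidth and $S$-star size at most $c+1$ rather than $c$). Your explicit use of unary ``pinning'' relations $R_v$ in the oracle queries is also cleaner than the paper's somewhat elliptical description of how the oracle is invoked.
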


The idea of the proof is as follows: Since the treewidth of $\contract(\calC)$ is bounded, we know that the unbounded treewidth of the cores does not originate from the structure of the free variables but only from the way the quantified variables interact in the $S$-components. We use the oracle for $\CQ[\all]$ to solve subqueries of the original query in order to ``contract'' the quantified variables into one variable per $S$-component. This results in an instance with the same solutions that has bounded treewidth. We then solve this instance with the algorithm of Theorem~\ref{thm:bounded}. \iftoggle{ARXIV}{We now give the full proof of Lemma~\ref{lem:decision}.}{The complete proof of Lemma~\ref{lem:decision} can be found in Appendix~\ref{app:decision}.}

\def\prooflemdecision{
\begin{proof}
 The construction is similar to that in the proof of Lemma~\ref{lem:contracthard}. Let $U$ be the relation that relates, to a query $(\relA, S)$ over $\tau$, 
the set of all queries of the form $(\relA', A')$ over $\tau$ with $|A'| \le |A|$.
 
 We now describe how the algorithm for $r$ works, given $(\relA, S)$, the queries related to $(\relA, S)$ by $U$, and $\relB$. W.l.o.g.~we assume that $\aug{\relA, S}$ is a core. Let $\calH$ be the hypergraph of $\relA$. We assume w.l.o.g.~that for every edge $e$ of $\calH$ the structure $\relA$ contains one relation~$R_e^\relA$ with only a single tuple $\vec{e}$ where $\vec{e}$ contains the elements of $e$ in an arbitrary order. We compute a new structure $\relA'$ as follows: We add for each $S$-component of $\relA$ a new domain element $a_C$. Then we add for each component $C$ and each element $a\in S\cap V(C)$ a new relation $R_{C,a}:=\{(a, a_C)\}$. Finally, we delete $A\setminus S$.
 
 If is easy to verify that the $S$-hypergraph of $(\relA,S)$ has treewidth at most $c+1$.
 
 We now construct a structure $\relB'$ over the same vocabulary as $\relA'$. For the relation symbols already present in $\relA$, we set $R^{\relB'} := R^\relB[S]$. For every $S$-component $C$ of $(\calH, S)$, let $D_C$ be the tuples encoding the elements of $\hom(\relA[V(C)], \relB,S\cap V(C))$. For every $v\in V(C)\setminus S$ we let $D_C$ be the domain of $v$. We let $R_{a,C}^{\relB'}$ contain the pairs $(u,v)$ such that $u$ coincides with the assignment to $a$ that is encoded in the assignment to $v$.
 
 We claim that the construction of $((\relA', S), \relB')$ can be done by an $\FPT$-algorithm with the given oracle. First note that, since the treewidth of $\contract(\calH)$ is bounded by $c$, there can be no clique of size greater than $c+1$ in $\contract(\calH)$. It follows that every $S$-component of $(\calH, S)$ can have at most $c+1$ vertices from $S$. Thus the size of $D_C$ is bounded by a polynomial in $A$. Moreover, for every mapping $h:V(C) \cap S \mapsto B$ it can be checked with the oracle if $h$ is to be added to $D_C$. Thus the procedure so far is an $\FPT$-algorithm.
 
 It is easy to see that $\hom(\relA, \relB, S) = \hom(\relA', \relB', S)$. But $(\relA',S)$ is of treewidth at most $c+1$ and its $S$-hypergraph has $S$-star size at most $c+1$. Thus we can compute $|\hom(\relA', \relB', S)|$ in polynomial time by Theorem~\ref{thm:bounded} which completes the proof.
\end{proof}
}

\iftoggle{ARXIV}{\prooflemdecision}{}

\section{Hardness results}\label{sct:hardness}

In this section we will prove the hardness results for Theorem~\ref{thm:trichotomy}. The main idea is reducing from the hard cases of Theorem~\ref{thm:bounded} in several steps.

\subsection{Simulating unary relations}

In this section we show that for queries whose augmented structure is a core we can simulate unary relations on the variables of the query. These additional relations will later allow us to tell the variables apart such that we can later simulate the case in which all atoms of the queries have different relation symbols.

 \begin{lemma}\label{lem:bijection}
  Let $(\relA,S)$ be a conjunctive query such that $\aug{\relA,S}$ is a core. Then every homomorphism $h:\relA\rightarrow \relA$ with $h|_S = \id$ is a bijection.
 \end{lemma}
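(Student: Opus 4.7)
The plan is to lift $h$ to an endomorphism of $\aug{\relA, S}$ and then invoke the standard fact that every endomorphism of a finite core is an automorphism.

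First I would verify that $h$, viewed as a map on the domain of $\aug{\relA, S}$ (which is the same as the domain of $\relA$), is a homomorphism of the augmented structure. For the relation symbols already in $\tau$, this is immediate since $h$ is a homomorphism of $\relA$. For each new unary symbol $R_a$ with $a \in S$, the sole tuple $(a)$ of $R_a^{\aug{\relA, S}}$ is mapped to $(h(a)) = (a)$ because $h|_S = \id$, and $(a) \in R_a^{\aug{\relA, S}}$. Hence $h$ is an endomorphism of $\aug{\relA, S}$.

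Next I would apply the standard fact that every endomorphism of a finite core is an automorphism. Since the excerpt does not state this fact explicitly, I would include a one-line justification: the homomorphic image $h(\aug{\relA, S})$ is a substructure of $\aug{\relA, S}$ and is homomorphically equivalent to it (via $h$ in one direction and the inclusion in the other), so by the defining property of a core the image must equal the whole structure. Thus $h$ is surjective on the (finite) domain, and therefore bijective.

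Finally, since $h$ is a bijection on the shared domain of $\relA$ and $\aug{\relA, S}$, the conclusion of the lemma follows directly. I do not anticipate any real obstacle here; the only subtlety is making explicit that the added unary relations force $h$ to pin the free variables, which is precisely what allows the core property of the augmented structure to transfer to a statement about homomorphisms of $\relA$ that are the identity on $S$.
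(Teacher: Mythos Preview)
Your proposal is correct and follows essentially the same approach as the paper: lift $h$ to an endomorphism of $\aug{\relA,S}$ using $h|_S=\id$, then use the core property to conclude that $h$ cannot map onto a proper substructure and is therefore a bijection on the finite domain. The paper's proof is slightly terser but the argument is identical.
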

\begin{proof}
 Clearly, $h$ is also a homomorphism $h:\aug{\relA,S}\rightarrow \aug{\relA,S}$, because $h(a)=a\in R_a^{\aug{\relA,S}}$ for every $a\in S$. But by assumption $\aug{\relA,S}$ is a core, so there is no homomorphism from $\aug{\relA,S}$ to a proper substructure and thus $h$ must be a bijection on $\aug{\relA,S}$ and consequently also on $\relA$.
\end{proof}

We assign a structure $\relA^*$ to every structure $\relA$: 

\begin{definition}
To a structure $\relA$ we assign the structure $\relA^*$ over the vocabulary $\tau\cup \{R_a\mid a\in A\} $ defined as $\relA^*:= \relA\cup \bigcup_{a\in A} R_a^{\relA^*}$ where $R_a^{\relA^*}:=\{a\}$. 
\end{definition}

Note that $\aug{\relA,S}$ and $\relA^*$ differ in which relations we add: For the structure $\aug{\relA,S}$ we add~$R_a^{\aug{\relA,S}}$ for variables $a\in S$ while for $\relA^*$ we add $R_a^{\relA^*}$ for all $a\in A$. Thus,~$\relA^*$ in general may have more relations than $\aug{\relA,S}$.

We now formulate the main lemma of this section whose proof uses ideas from \cite{DalmauJ04}.

\begin{lemma}\label{lem:constants}
Let $\calC$ be a class of conjunctive queries such that for each $(\relA, S)\in \calC$ the augmented structure $\aug{\relA,S}$ is a core. Let $\calC^*:=\{(\relA^*, S)\mid (\relA, S)\in \calC\}$. Then there is a counting slice reduction from $p$-$\sCQ[\calC^*]$ to $p$-$\sCQ[\calC]$.
\end{lemma}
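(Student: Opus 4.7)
Plan: Given an instance $((\relA^*, S), \relB^*)$ of $p$-$\sCQ[\calC^*]$, the plan is to compute $|\hom(\relA^*, \relB^*, S)|$ using oracle queries to $\sCQ$ on $(\relA, S) \in \calC$. The difficulty is that the oracle query $(\relA, S)$ does not ``see'' the per-variable unary relations $R_a^{\relA^*}=\{a\}$, so they must be encoded into the target structure. Setting the slice set to $T = \{(\relA,S)\}$, I would construct an auxiliary target $\hat\relB$ over $\tau$ by a tagged direct-product construction: take $\hat B = \{(a, b) : a \in A,\ b \in R_a^{\relB^*}\}$ and, for each $R \in \tau$ of arity $k$, $R^{\hat\relB} = \{((a_1, b_1), \ldots, (a_k, b_k)) : (a_1, \ldots, a_k) \in R^{\relA},\ (b_1, \ldots, b_k) \in R^{\relB^*}\}$.

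Any homomorphism $h: \relA \to \hat\relB$ decomposes uniquely as $h(a) = (\alpha(a), \beta(a))$, where $\alpha$ is an endomorphism of $\relA$ and $\beta: \relA \to \relB^*$ is a $\tau$-homomorphism satisfying $\beta(a) \in R_{\alpha(a)}^{\relB^*}$ for every $a \in A$. The homomorphisms with $\alpha = \mathrm{id}$ are in bijection with $\hom(\relA^*, \relB^*)$ via $h \leftrightarrow \beta$, and their $S$-restrictions with $\hom(\relA^*, \relB^*, S)$. The central use of Lemma~\ref{lem:bijection} is that, because $\aug{\relA, S}$ is a core, every endomorphism $\alpha$ with $\alpha|_S = \mathrm{id}|_S$ is an automorphism of $\aug{\relA, S}$; the group $G = \mathrm{Aut}(\aug{\relA, S})$ acts on the canonical $\beta$'s by $\beta \mapsto \beta \circ \alpha^{-1}$, and since each such $\alpha$ fixes $S$ pointwise, the induced action on $S$-restrictions is trivial. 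Consequently, the ``$\alpha|_S = \mathrm{id}|_S$'' contribution to $|\hom(\relA, \hat\relB, S)|$ equals exactly $|\hom(\relA^*, \relB^*, S)|$.

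The main obstacle is handling the contributions of homomorphisms with $\alpha|_S \neq \mathrm{id}|_S$: each partial endomorphism $\alpha_0: S \to A$ extending to some $\alpha \in \mathrm{End}(\relA)$ contributes an extra batch of $S$-restrictions to the oracle's count, tagged by a first-coordinate function $\alpha_0$ that differs from the identity. Because the set of such $\alpha_0$'s depends only on the parameter $(\relA,S)$ and is therefore of FPT size, the plan is to enumerate them and, for each $\alpha_0$, build a variant target $\hat\relB_{\alpha_0}$ by re-labelling which $R_a^{\relB^*}$ decorates each slice $\{a\}\times\cdot$ of the domain, reducing each extra to an identity-type contribution against a modified structure. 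A final inclusion--exclusion (or Möbius-style inversion) over the finite collection of $\alpha_0$'s combines the oracle outputs to isolate $|\hom(\relA^*,\relB^*,S)|$.

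I expect the hard part to be exactly this final extraction step: verifying that the variant targets produce sufficiently independent equations so that the system can be inverted to recover $|\hom(\relA^*, \relB^*, S)|$ in a way that is polynomial in the input size (the parameter-dependent overhead being absorbed by the FPT bound). This is precisely where Lemma~\ref{lem:bijection}, by tying the endomorphism structure of $\relA$ to the automorphism group of the augmented core, rules out pathological collisions between the contributions of different $\alpha_0$'s and makes the inversion well-defined.
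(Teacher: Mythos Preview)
Your setup matches the paper exactly: the product structure you call $\hat\relB$ is the paper's $\relD$, your $(\alpha,\beta)$-decomposition is the same, and your identification of the $\alpha|_S=\id$ contribution with $|\hom(\relA^*,\relB^*,S)|$ is precisely the paper's Claim~\ref{clm:constIdentity} (the paper calls this set $\calN$).

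The gap is exactly where you flag it, and your proposed mechanism does not close it. Relabelling which $R_a^{\relB^*}$ decorates each slice merely permutes the slices; it does not produce a family of oracle counts from which the contributions of the various $\alpha_0$ can be separated, and you have not specified any partial order on the $\alpha_0$'s that would support a M\"obius inversion. In particular, when $\alpha_0(S)\not\subseteq S$ there is no automorphism structure to exploit, so such a contribution cannot be turned into an ``identity-type contribution against a modified structure'' in any evident way.

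The paper's extraction is different and is the idea you are missing. It does \emph{not} separate by individual $\alpha_0$. Instead: let $\calN'$ be those $h\in\hom(\relA,\relD,S)$ with $(\pi_1\circ h)(S)=S$, and let $I$ be the maps $S\to S$ that extend to automorphisms of $\relA$. The core hypothesis (via Claim~\ref{clm:automorphism}) yields $|\calN'|=|I|\cdot|\calN|$, so it suffices to compute $|\calN'|$. Inclusion--exclusion over the \emph{image set} gives $|\calN'|=\sum_{T\subseteq S}(-1)^{|S\setminus T|}|\calN_T|$ with $\calN_T=\{h:(\pi_1\circ h)(S)\subseteq T\}$. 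Finally, each $|\calN_T|$ is obtained by \emph{interpolation via cloning}: for $j=1,\ldots,|S|$ one builds a structure $\relD_{j,T}$ in which every element of $\relD$ whose first coordinate lies in $T$ is replaced by $j$ copies. An $h$ sending exactly $i$ elements of $S$ into $T$-slices then lifts in $j^{\,i}$ ways, so the oracle values $|\hom(\relA,\relD_{j,T},S)|$ for varying $j$ form a Vandermonde system whose solution recovers $|\calN_T|$. The multiplicative effect of copies is what makes the linear system invertible---and is precisely what your relabelling plan lacks.
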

\begin{proof}
 Let $((\relA^*,S),\relB)$ be an input for $\sCQ[\calC^*]$. Remember that $\relA^*$ and $\relB$ are structures over the vocabulary $\tau \cup \{R_a\mid a\in A\}$. 
For every query $(\relA, S)$, the relation $U$ of our counting slice reduction contains $((\relA^*, S), (\relA, S))$.
Obviously, $U$ is computable and satisfies the coverage property.

 We now will reduce the computation of the size $|\hom(\relA^*, \relB, S)|$ to the computation of $|\hom(\relA, \relB', S)|$ for different structures $\relB'$.
 
 Let $D:= \{(a,b)\in A\times B\mid b \in R_a^\relB\}$ and define a structure $\relD$ over the vocabulary $\tau$ with the domain $D$ that contains for each relation symbol $R\in \tau$ the relation \begin{align*}R^\relD:= \{ ((a_1, b_1), \ldots , (a_r, b_r)) \mid &(a_1, \ldots , a_r)\in R^\relA, (b_1, \ldots , b_r)\in R^\relB, \\&\forall i\in [r]: (a_i, b_i)\in D\}.\end{align*} Let again $\pi_1: D\rightarrow A$ be the projection onto the first coordinate, i.e., $\pi_1(a,b):=a$.
 Observe that $\pi_1$ is by construction of $\relD$ a homomorphism from $\relD$ to $\relA$.
 
 We will several times use the following claim:
 
 \begin{claim}\label{clm:automorphism}
  Let $h$ be a homomorphism from $\relA$ to $\relD$ with $h(S)=S$. Then $\pi_1\circ h$ is an automorphism of $\relA$.
 \end{claim}
\begin{proof}
 Let $g:=\pi_1\circ h$. As the composition of two homomorphisms, $g$ is a homomorphism from $\relA$ to $\relA$. Furthermore, by assumption $g|_S$ is a bijection from $S$ to $S$. Since $S$ is finite, there is $i\in \mathbb{N}$ such that $g^i|_S= \id$. But $g^i$ is a homomorphism and thus, by Lemma \ref{lem:bijection}, $g^i$ is a bijection. It follows that $g$ is a bijection. 
 
Since $A$ is finite, there is $j\in \mathbb{N}$ such that $g^{-1}= g^l$. It follows that~$g^{-1}$ is a homomorphism and thus $g$ is an automporphism.
\end{proof}

 Let $\calN$ be the set of mappings $h:S\rightarrow D$ with $\pi_1 \circ h = \id$ that can be extended to a homomorphism $h':\relA\rightarrow \relD$.\hubie{can a more descriptive letter be used, also?}\stefan{I have no idea what a more descriptive letter might be}
 
 \begin{claim}\label{clm:constIdentity}
There is a bijection between $\hom(\relA^*, \relB, S)$ and $\calN$.
 \end{claim}
\begin{proof}
For each $h^*\in \hom(\relA^*, \relB, S)$ we define $P(h^*):=h$ by $h(a):=(a,h^*(a))$ for $a\in S$. From the extension of $h^*$ to $A$ we get an extension of $h$ that is a homomorphism and thus $h\in \calN$. Thus $P$ is a mapping $P:\hom(\relA^*, \relB,S)\rightarrow \calN$.

We claim that $P$ is a bijection. Clearly, $P$ is injective. We we will show that it is surjective as well. To this end, let $h:S\rightarrow D$ be a mapping in $\calN$ and let~$h_e$ be a homomorphism from $\relA$ to $\relD$ that is an extension of $h$. By definition of~$\calN$ such a $h_e$ must exist. By Claim \ref{clm:automorphism} we have that $\pi_1\circ h_e$ is an automorphism, and thus $(\pi_1\circ h_e)^{-1}$ is a homomorphism. We set $h_e' := h_e\circ(\pi_1\circ h_e)^{-1}$. Obviously,~$h_e'$ is a homomorphism from $\relA$ to $\relD$, because $h_e'$ is the composition of two homomorphisms. Furthermore, for all $a\in S$ we have $h_e'(a)= (h_e\circ (\pi_1\circ h_e))(a) = (h_e\circ (\pi_1\circ h))(a) = h_e(a) = h(a)$, so $h_e'$ is an extension of $h$. Moreover $\pi_1\circ h_e' = (\pi_1\circ h_e)\circ (\pi_1\circ h_e)^{-1} = \id$. Hence, we have $h_e' = \id \times \hat{h}$ for a homomorphism $\hat{h}:\relA \rightarrow {\hatrelB}$, where $\hatrelB$ is the structure we get from $\relB$ by deleting the relations~$R_a^{\relB}$ for $a\in A$. But by definition $h_e'(a)\in D$ for all $a\in A$ and thus $\hat{h}(a)\in R_a^\relB$. It follows that $\hat{h}$ is a homomorphism from $\relA^*$ to $\relB$. We set $h^*:=\hat{h}|_S$. Clearly, $h^*\in \hom(\relA^*, \relB,S)$ and $P(h^*) = h$. It follows that $P$ is surjective. This proves the claim.
\end{proof}

Let $I$ be the set of mappings $g:S\rightarrow S$ that can be extended to an automorphism of $\relA$. Let $\calN'$ be the set of mappings $h:S \rightarrow D$ with $(\pi_1 \circ h)(S)=S$ that can be extended to homomorphisms $h':\relA\rightarrow \relD$.

\begin{claim}
 \[|\hom(\relA^*, \relB, S)| = \frac{|\calN'|}{|I|}.\]
\end{claim}
\begin{proof}
Because of Claim \ref{clm:constIdentity} it is sufficient to show that
\begin{equation}\label{eq:0a}
|\calN'| = |\calN||I|. 
\end{equation}
We first prove that
\begin{equation}\label{eq:1a}
\calN' = \{f\circ g \mid f\in \calN, g\in I\}. 
\end{equation}
The $\supseteq$ direction is obvious. For the other direction let $h\in \calN'$. Let~$h'$ be the extension of $h$ that is~a homomorphism $h':\relA\rightarrow \relD$. By Claim \ref{clm:automorphism}, we have that $g := \pi_1\circ h'$ is an automorphism of $\relA$. It follows that $g^{-1}|_S\in I$. Furthermore, $h\circ g^{-1}|_S$ is a mapping from $S$ to~$D$ and $h'\circ g^{-1}$ is an extension that is a homomorphism from $\relA$ to $\relD$. Furthermore $(\pi_1\circ h'\circ g^{-1}|_S)(a) = (g|_S \circ g^{-1}|_S)(a)=a$ for every $a\in S$ and hence $h'\circ g^{-1}|_S \in \calN$ and $h= h \circ g^{-1}|_S \circ g|_S$ which proves the claim~(\ref{eq:1a}).

To show (\ref{eq:0a}), we claim that for every $f,f'\in \calN$
and every $g, g'\in I$, if $f\ne f'$ or $g\ne g'$, then $f\circ g\ne f'\circ g'$. To see this, observe that $f$ can always be written as $f= \id\times f_2$ and thus $(f\circ g)(a) = (g(a), f_2(g(a))$. Thus, if $g$ and $g'$ differ, $\pi_1\circ f\circ g \ne \pi_1 \circ f' \circ g'$ and thus $f\circ g\ne f'\circ g'$. Also, if $g=g'$ and $f\ne f'$, then clearly $f\circ g\ne f'\circ g'$. This completes the proof of (\ref{eq:0a}) and the claim.
\end{proof}

Clearly, the set $I$ depends only on $(\relA,S)$ and thus it can be computed by an $\FPT$-algorithm. Thus it suffices to show how to compute $|\calN'|$ in the remainder of the proof.

For each set $T\subseteq S$ we define $\calN_T:= \{h\in \hom(\relA, \relD, S) \mid (\pi_1\circ h)(S) \subseteq T\}$. We have by inclusion-exclusion 
\begin{equation}\label{eq:2a}
|\calN'| = \sum_{T\subseteq S} (-1)^{|S\setminus T|} |\calN_T|.                                                                                                                        
\end{equation}
Observe that there are only $2^{|S|}$ summands in (\ref{eq:2a}) and thus if we can reduce all of them to $\sCQ$ with the query $(\relA,S)$ this will give us the desired counting slice reduction.

We will now show how to compute the $|\calN_T|$ by interpolation.
So fix a $T\subseteq S$. Let $\calN_{T,i}$ for $i=0,\ldots, |S|$ 
consist of the mappings $h\in \hom(\relA, \relD, S)$ such that there are exactly $i$ elements $a\in S$ that are mapped to $h(a)= (a', b)$ such that $a'\in T$.
Obviously, $\calN_T = \calN_{T,|S|}$ with this notation.

Now for each $j=1,\ldots , |S|$ we construct a new structure $\relD_{j,T}$ over the domain $D_{j,T}$. 
To this end, for each $a \in T$, let $a^{(1)}, \ldots, a^{(j)}$ be copies of $a$
which are not in $D$. Then we set \begin{align*}D_{j,T}:= \{(a^{(k)}, b)\mid (a,b)\in D, a\in T, k\in [j]\} \cup \{(a,b)\mid (a,b)\in D, a\notin T\}.\end{align*} We define a mapping $B:D\rightarrow \wp(D_{j,T})$, where $\wp(D_{j,T})$ is the power set of $D_{j,T}$, by
\[B(a,b):= \begin{cases}
            \{(a^{(k)},b)\mid k\in [j]\}\}, & \text{if } a\in T\\
            \{(a,b)\}, & \text{otherwise}.
           \end{cases}\]
For every relation symbol $R\in \tau$ we define
$R^{\relD_{T,j}} := \bigcup_{(d_1, \ldots, d_s)\in R^{D}} B(d_1)\times \ldots \times B(d_s).$


Then every $h\in \calN_{T,i}$ corresponds to $i^j$ mappings in $\hom(\relA, \relD_{j,T},S)$. Thus for each~$j$ we get 
$\sum_{i=1}^{|S|} i^j |\calN_{T,i}| = |\hom(\relA,\relD_{j,T}, S)|.$
This is a linear system of equations and the corresponding matrix is a Vandermonde matrix, so $\calN_T = \calN_{T,|S|}$ can be computed with an oracle for $\sCQ$ on the instances $((\relA, S),\relD_{j, T})$. The size of the linear system depends only on~$|S|$. Furthermore, $\|D^j\|\le \|D\| j^s\le \|D\|^{s+1}$ where $s$ is the bound on the arity of the relations symbols in $\tau$ and thus a constant. It follows that the algorithm described above is a counting slice reduction. This completes the proof of Lemma~\ref{lem:constants}.
\end{proof}

\subsection{Reducing from hypergraphs to structures}

In this section we show that we can in certain situations reduce from $\sCQh$ to $\sCQ$. This will later allow us to reduce from the hard cases in Theorem~\ref{thm:bounded} to show the hardness results of Theorem~\ref{thm:trichotomy}.

We proceed in several steps.
Let in this section $\calC$ be a class of conjunctive queries of bounded arity. 
To every query $(\relA, S)$ we construct a structure $\hat{\relA}$ as follows; note that when we use this notation, $S$ will be clear from the context. Construct the augmented structure $\aug{\relA,S}$ of $\relA$ and compute its core.
 We define $\hatrelA$ to be the structure that we get by deleting the relations~$R_a$ for $a\in S$ that we added in the construction of $\aug{\relA,S}$. We set $\hatcalC:=\{(\hat{\relA},S) \mid (\relA,S)\in \calC\}.$

Note that in any situation where we apply both the $\hat{ }$ - and ${ }^*$-operators, the $\hat{ }$ is applied before the ${ }^*$.
\begin{claim}
  There is a parsimonious slice-reduction from $\sCQ[\hatcalC]$ to $\sCQ[\calC]$.
 \end{claim}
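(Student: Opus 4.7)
The plan is to observe that $(\hat{\relA}, S)$ and $(\relA, S)$ have the same core in the sense of Definition~\ref{def:augmented}, so by Theorem~\ref{thm:ChandraMerlin} they are equivalent queries, meaning $\hom(\hat{\relA}, \relB, S) = \hom(\relA, \relB, S)$ for every structure $\relB$. Once this is established, the reduction is essentially the identity on the database part.

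To establish the equivalence, I would write $\rel{C}$ for the core of $\aug{\relA, S}$, so that by construction $\hat{\relA}$ is $\rel{C}$ with the added unary relations $\{R_a \mid a \in S\}$ deleted. Since $R_a^{\aug{\relA, S}} = \{a\}$, every homomorphism with domain $\aug{\relA, S}$ must map $a$ to itself; in particular $S$ is contained in the domain of every substructure of $\aug{\relA, S}$ that is homomorphically equivalent to it, and hence $S$ is contained in the domain of $\rel{C}$. Reattaching the singleton relations $R_a = \{a\}$ to $\hat{\relA}$ for $a \in S$ therefore recovers $\rel{C}$ exactly, i.e.\ $\aug{\hat{\relA}, S} = \rel{C}$. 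Because $\rel{C}$ is already a core, it is the core of $\aug{\hat{\relA}, S}$, so by Definition~\ref{def:augmented} the core of $(\hat{\relA}, S)$ equals $\rel{C}$, which is also the core of $(\relA, S)$. Applying Theorem~\ref{thm:ChandraMerlin} yields the claimed equality of hom-sets.

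For the reduction itself, I would let $U$ consist of all pairs $((\hat{\relA}, S), (\relA, S))$ for which $\hat{\relA}$ is obtained from $(\relA, S)$ by the construction above. This set is computably enumerable because, given any query $(\relA, S)$, the structures $\aug{\relA, S}$, its core, and thus $\hat{\relA}$ are all computable (though not necessarily in polynomial time). I would then define $r((\hat{\relA}, S), (\relA, S), \relB) := \relB$, which is trivially FPT-computable with respect to $(\pi_1, \pi_2)$. Coverage is immediate from the definition $\hatcalC := \{(\hat{\relA}, S) \mid (\relA, S) \in \calC\}$, and correctness is exactly the equality $\hom(\hat{\relA}, \relB, S) = \hom(\relA, \relB, S)$ established in the previous paragraph. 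The only subtlety, and the sole point that requires genuine verification, is that the $\hat{\ }$-operation preserves the core as defined in Definition~\ref{def:augmented}; there is no complexity-theoretic obstacle, as the reduction does essentially no work on the database.
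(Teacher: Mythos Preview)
Your proof is correct and follows essentially the same approach as the paper: both show that $(\hat{\relA},S)$ and $(\relA,S)$ are equivalent queries via Theorem~\ref{thm:ChandraMerlin}, and then take $r$ to be the identity on $\relB$. Your justification that $\aug{\hat{\relA},S}$ coincides with the core of $\aug{\relA,S}$ is in fact more explicit than the paper's, which argues more briefly via homomorphic equivalence of the augmented structures; both arguments arrive at the same conclusion.
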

\begin{proof}
The relation $U$ relates to every query $(\relA, S)$ the query $(\hatrelA, S)$. Certainly, $U$ is computable and by definition assigns to each query in $\hatcalC$ a query in $\calC$.
 
We have that $\hatrelA$ is a substructure of $\relA$ and there is a homomorphism from $\relA$ to $\hatrelA$, because there is a homomorphism from $\aug{\relA,S}$ to $\aug{\hatrelA,S}$.
 Hence, $\relA$ and $\hatrelA$ are homomorphically equivalent and by Theorem \ref{thm:ChandraMerlin} we have that $(\relA,S)$ and $(\hatrelA,S)$ are equivalent. Thus setting $r((\relA,S), (\hatrelA,S), \relB)):= \relB$ yields the desired parsimonious slice-reduction.
\end{proof}
 
Let $\hatcalC^*:=\{(\hatrelA^*, S)\mid (\hatrelA, S)\in \hatcalC\}$. Note that, by Lemma~\ref{lem:constants}, there is a counting slice reduction from $p$-$\sCQ[\hatcalC^*]$ to $p$-$\sCQ[\hatcalC]$.

 Let now $\calG$ be the class of $S$-hypergraphs associated to the queries in $\hatcalC$. 
 
\begin{claim}
 There is a parsimonious slice reduction from $\sCQh[\calG]$ to $\sCQ[\hatcalC^*]$.
\end{claim}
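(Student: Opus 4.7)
The plan is to use the unary relations $R_a$ in $\hatrelA^*$ to ``label'' each query variable $a$, and then encode the full atomic content of the input $\relA$ into a target structure $\relB'$ whose domain is indexed by these labels. For each $(\calH, S)\in\calG$, the definition of $\calG$ supplies at least one $(\hatrelA, S)\in\hatcalC$ with $S$-hypergraph $(\calH, S)$; we relate $((\calH, S), (\hatrelA^*, S))$ in $U$ for each such witness. Coverage then holds by construction, and computable enumerability of $U$ follows from that of $\hatcalC$.

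For the transformation of the input side, given $((\calH, S), ((\relA, S), \relB))$ with $\calH$ the hypergraph of $\relA$ (so in particular $A = V(\calH) = \hatA$), the reduction function $r$ produces a structure $\relB'$ over the same vocabulary as $\hatrelA^*$ with domain $A\times B$, defined as follows. Set $R_a^{\relB'} := \{a\}\times B$ for each $a\in A$. Then for each non-unary relation symbol $R$ of $\hatrelA^*$, declare $((a_1, b_1), \ldots, (a_k, b_k)) \in R^{\relB'}$ exactly when $(a_1, \ldots, a_k) \in R^{\hatrelA}$ and the partial assignment $g\colon a_i \mapsto b_i$ satisfies every atom of $\relA$ whose underlying hyperedge equals $\{a_1, \ldots, a_k\}$. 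This construction runs in polynomial time in $|\relA| + |\relB|$ and is hence FPT with respect to the parameters.

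Correctness will then follow from a bijection $\Phi\colon \hom(\relA, \relB, S) \to \hom(\hatrelA^*, \relB', S)$ given by $\Phi(g)(a) = (a, g(a))$. The unary constraints $R_a^{\relB'}$ force every homomorphism $h\colon \hatrelA^* \to \relB'$ to have the form $h(a) = (a, g(a))$ for some $g\colon A \to B$, while the non-unary constraints encoded in $R^{\relB'}$ translate precisely into ``$g$ is a homomorphism $\relA\to\relB$''. The key structural fact used here is that $\calH$ is the common hypergraph of both $\relA$ and $\hatrelA$: every hyperedge $e$ of $\calH$ supports at least one atom of $\hatrelA$, so any $\relA$-atom on $e$ is enforced by the $R^{\relB'}$-membership condition attached to that $\hatrelA$-atom; conversely, since tuples of $R^{\relB'}$ are defined to witness the $\relA$-atoms on their underlying hyperedge, the $h$-satisfaction of each $\hatrelA$-atom propagates to $g$-satisfaction of the corresponding $\relA$-atoms.

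The main technical point is verifying this two-sided correspondence carefully: the matching between atoms of $\relA$ and atoms of $\hatrelA$ through their shared hypergraph must account for multiple atoms on the same hyperedge as well as atoms whose argument tuples contain repeated variables. Once this is established, the parsimonious identity $\sCQh((\calH, S), ((\relA, S), \relB)) = \sCQ((\hatrelA^*, S), \relB')$ follows directly from the bijection, completing the parsimonious slice reduction.
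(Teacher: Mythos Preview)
Your proof follows the same approach as the paper's: the target structure $\relB'$ has domain $A\times B$, the added unary relations $R_a$ pin the first coordinate to the identity, and the $\tau$-relations of $\relB'$ encode the atoms of the input $\relA$ on top of the tuples of $\hatrelA$, using that $\relA$ and $\hatrelA$ share the hypergraph $\calH$. Your handling of the case where several $\relA$-atoms sit on the same hyperedge is slightly more direct than the paper's (which first normalizes $\relA$ so that each tuple lies in a single relation), but the two constructions are equivalent and your bijection argument is correct.

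One point needs correction. You assert that ``computable enumerability of $U$ follows from that of $\hatcalC$'', but in the case-complexity framework there is no computability hypothesis on $\calC$, and hence none on $\hatcalC$; avoiding such hypotheses is the whole purpose of that framework. As you have defined it, $U$ restricts the second component to witnesses coming from $\hatcalC$, and this set may fail to be computably enumerable. The repair is the one the paper uses (somewhat implicitly): let $U$ relate any $S$-hypergraph $(\calH,S)$ to \emph{every} query of the form $(\relD^*,S)$ where $\relD$ has hypergraph $\calH$, irrespective of whether $(\relD,S)\in\hatcalC$. This $U$ is decidable. Coverage still holds because each $(\calH,S)\in\calG$ has a witness $(\hatrelA^*,S)\in\hatcalC^*$ with that hypergraph, and your correctness argument only uses that the second component has hypergraph $\calH$, so it already applies to all of this larger $U$.
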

\begin{proof}
The relation $U$ relates every $S$-hypergraph $(\calH,S)$ in $\calG$ to all queries $(\hatrelA^*, S)$ with the hypergraph $(\calH,S)$. Certainly, $U$ is computable and by definition of $\calG$ it assigns to $S$-hypergraph in $\calG$ a query in $\hatcalC^*$.

It remains to describe the function $r$.
So let $((\relA,S),\relB)$ be a $p$-$\sCQ$-instance such that $(\relA,S)$ has the $S$-hypergraph $(\calH, S)\in\calG$. 
 We assume w.l.o.g.~that every tuple appears only in one relation of $\relA$. If this is not the case, say a tuple $t$ appears in two relations $R_1^\relA$ and $R_2^\relA$, then we build a new instance as follows: Delete $t$ from $R_1^\relA$ and $R_2^\relA$, add a new relation $R_t^\relA$ to $\relA$ containing only $t$. Finally, set $R^B=R_1^B\cap R_2^\relB$. This operation does not change the associated $S$-hypergraph, so this new instance still has the $S$-hypergraph $(\calH,S)$. Moreover it is easy to see that it has the same set of solutions.

Let the vocabulary of $(\hatrelA, S)$ be $\tau$.
We construct a structure $r((\relA,S), (\hatrelA^*,s),\relB)=: \hatrelB$ over the same relation symbols as~$\hatrelA^*$, i.e., over the vocabulary $\tau \cup \{R_a\mid a\in \hatA\}$. The structure $\hatrelB$ has the domain $\hatB:=A\times B$ where $A$ is the domain of $\relA$ and $B$ is the domain of $\relB$.
 For $\hatR\in \tau$ we set 
 \begin{align*}\hatR^{\hatrelB}:=\{ ((a_1, b_1), \ldots , (a_k, b_k)) \mid & (a_1, \ldots, a_k)\in \hatR^{\hatrelA}, (a_1, \ldots, a_k)\in R^\relA, \\&(b_1, \ldots, b_k)\in R^\relB\}.\end{align*}
 Furthermore, for the relations symbols $\hatR_a$ that are added in the construction of $\hatrelA^*$ from $\hatrelA$ we set $\hatR_a^{\hatrelB}:= \{(a, b)\mid b\in B\}$, where $B$ is the domain of $\relB$.
 
It is easy to see that from a satisfying assignment $h:\relA \rightarrow \relB$ we get a homomorphism $h':\hatrelA^* \rightarrow \hatrelB$ by setting $h'(a):= (a, h(a))$. Furthermore, this construction is obviously bijective. Thus we get $|\hom(\relA, \relB,S)| = |\hom(\hatrelA^*, \hatrelB, S)|$. 
Since $\hatrelB$ can  be constructed in polynomial time in $\|\relA\|$ and $\|\relB\|$, this is a parsimonious slice reduction.
 \end{proof}

\begin{corollary}\label{cor:graphstostructures}
Let $\calC$ be a class of conjunctive queries of bounded arity and let $\calG$ be the class of $S$-hypergraphs of the cores of $\calC$. Then there is a counting slice reduction from $\sCQh[\calG]$ to $\sCQ[\calC]$.
\end{corollary}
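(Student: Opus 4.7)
The plan is to chain together the three reductions already established in the preceding claims and lemma, using the transitivity of counting slice reducibility (Theorem~\ref{thm:transitivity}) together with the fact that parsimonious slice reductions are counting slice reductions.

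First I would verify that the class $\calG$ as defined in the statement of the corollary coincides with the class $\calG$ used in the buildup above, namely the class of $S$-hypergraphs of queries in $\hatcalC$. This is immediate from the definition of $\hatcalC$: starting from $(\relA, S) \in \calC$ we build $\aug{\relA, S}$, take its core, and strip the added unary relations $R_a$; the underlying $S$-hypergraph of the resulting query is exactly the $S$-hypergraph of the core of $(\relA, S)$ in the sense of Definition~\ref{def:augmented}.

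Next I would assemble the chain. The first claim in this subsection gives a parsimonious slice reduction from $\sCQ[\hatcalC]$ to $\sCQ[\calC]$, and by the proposition relating parsimonious and counting slice reductions this yields a counting slice reduction. Lemma~\ref{lem:constants} (whose hypothesis that $\aug{\hatrelA, S}$ is a core holds by construction of $\hatcalC$) supplies a counting slice reduction from $\sCQ[\hatcalC^*]$ to $\sCQ[\hatcalC]$. Finally, the second claim of this subsection gives a parsimonious, hence counting, slice reduction from $\sCQh[\calG]$ to $\sCQ[\hatcalC^*]$. Composing the three via Theorem~\ref{thm:transitivity} produces a counting slice reduction
\[ \sCQh[\calG] \;\longrightarrow\; \sCQ[\hatcalC^*] \;\longrightarrow\; \sCQ[\hatcalC] \;\longrightarrow\; \sCQ[\calC], \]
which is exactly what the corollary asserts.

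There is essentially no obstacle here beyond bookkeeping: the hard work was carried out in Lemma~\ref{lem:constants} (simulating unary relations) and in the second claim (translating from the hypergraph setting to a concrete query over the augmented-and-starred structure). The only thing I would be careful about is confirming that each intermediate class satisfies the hypotheses required by the next reduction in the chain, and in particular that the bounded-arity assumption on $\calC$ is preserved under the $\hat{\cdot}$ and ${}^*$ operators so that Lemma~\ref{lem:constants} applies as stated.
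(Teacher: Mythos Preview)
Your proposal is correct and mirrors exactly the (implicit) argument in the paper: the corollary is obtained by composing the second claim, Lemma~\ref{lem:constants} applied to $\hatcalC$, and the first claim, invoking transitivity and the proposition that parsimonious slice reductions are counting slice reductions. Your remarks on verifying that $\aug{\hatrelA,S}$ is a core and that bounded arity is preserved are apt and precisely the bookkeeping the paper leaves to the reader.
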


\subsection{Strict star size}

In this section we introduce a notion strict $S$-star size to simplify some of the arguments in the next section.
We define the strict $S$-star size of a hypergraph to be the maximum number of vertices in $S$ that are contained in one $S$-component of $\calH$.
\iftoggle{ARXIV}{
The aim of this section is the following Lemma.
}{}

\begin{lemma}\label{lem:strictSS}
 Let $\calG$ be a class of $S$-hypergraphs of bounded arity. If the strict $S$-star size of the $S$-hypergraphs in $\calG$ is unbounded, then there is a counting slice reduction from $\clique[\mathbb{N}]$ to $\sCQh[\calG]$.
\end{lemma}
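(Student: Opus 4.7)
The plan is to give a counting slice reduction from $\clique[\mathbb{N}]$ to $\sCQh[\calG]$. Given a $\clique$-instance $(k,G)$, I would first use that $\calG$ is computably enumerable and has unbounded strict $S$-star size to enumerate $\calG$ and locate an $S$-hypergraph $(\calH,S)\in\calG$ with an $S$-component containing $k$ distinguished vertices $s_1,\dots,s_k\in S$. Writing $C$ for the connected component of $\calH[V-S]$ generating this $S$-component and $E_C$ for the set of hyperedges meeting $C$, the assignment $k\mapsto(\calH,S)$ yields the computably enumerable set $U$ witnessing the coverage condition of the slice reduction.

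Next I would build the $\sCQh$-instance $((\calH,S),((\relA,S),\relB))$. For $\relA$, I use a fresh relation symbol $R_e$ of arity $|e|$ for each hyperedge $e$ of $\calH$ and place one tuple $t_e=(u_1^e,\dots,u_{|e|}^e)$ listing the elements of $e$ in $R_e^\relA$; this guarantees that $\calH$ is exactly the hypergraph of $\relA$. For $\relB$ I take the domain to be $V(G)\cup T\cup\{\star\}$, where $T\subseteq V(G)^k$ is the set of $k$-tuples $(v_1,\dots,v_k)$ forming a clique in $G$ (enumerable in $f(k)\cdot\mathrm{poly}(|G|)$ time). For $e\in E_C$, I define $R_e^\relB$ to contain the all-$\star$ tuple together with, for each $\tau=(v_1,\dots,v_k)\in T$, the tuple whose $i$-th entry equals $v_j$ when $u_i^e=s_j$, equals $\tau$ when $u_i^e\in C$, and equals $\star$ at every remaining position. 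For $e\notin E_C$, I make $R_e^\relB$ permissive: it contains every tuple with $\star$ at quantified positions and any value in $V(G)\cup\{\star\}$ at $S$-positions.

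The key combinatorial claim is that
\[|\hom(\relA,\relB,S)|=(1+|T|)\cdot(|V(G)|+1)^{|S\setminus V_C|}.\]
The crucial use of the $S$-component structure runs as follows. Because $C$ is connected in $\calH[V-S]$, any two quantified vertices $u,u'\in C$ are linked by a sequence of hyperedges in $E_C$, and by construction each such edge forces its two quantified endpoints to receive a common value in $\{\star\}\cup T$; hence every quantified vertex of $C$ receives one common image. This produces two cases: Case~A, in which this image is $\star$ and every $E_C$-edge incident to a distinguished $s_j$ then forces $h(s_j)=\star$; and Case~B, in which the image is some $\tau\in T$ and the same edges propagate $h(s_j)=\pi_j(\tau)$. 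In both cases the other $S$-vertices of $V_C$ are pinned to $\star$, while the $S$-vertices outside $V_C$ are touched only by the permissive non-$E_C$ edges and contribute the independent factor $(|V(G)|+1)^{|S\setminus V_C|}$. Case~A yields one restriction $h|_S$ and Case~B yields $|T|$ distinct such restrictions, giving the formula. Since $G$ has a $k$-clique iff $|T|>0$, the reduction decides $\clique$ by comparing the oracle's answer to $(|V(G)|+1)^{|S\setminus V_C|}$.

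The main obstacle, as I see it, is dealing with $s_j$'s that may also occur in hyperedges outside $E_C$ (an $S$-vertex can belong to several $S$-components, or to a purely-$S$ hyperedge). A naive restriction of non-$E_C$ edges would force such $s_j$ to $\star$ and collapse Case~B. The permissive definition above bypasses this at the cost of the explicit $(|V(G)|+1)^{|S\setminus V_C|}$ factor in the count; but since this factor is independent of $G$'s clique structure and computable from the input, it can simply be subtracted out before the final decision.
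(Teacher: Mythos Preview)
Your construction has a fatal running-time flaw. You place the set $T$ of all ordered $k$-cliques of $G$ into the domain of $\relB$, and each relation $R_e^{\relB}$ with $e\in E_C$ contains one tuple per element of $T$. The parenthetical claim that $T$ is ``enumerable in $f(k)\cdot\mathrm{poly}(|G|)$ time'' is simply false: if it were, $\pclique$ would lie in $\FPT$ and there would be nothing to prove. Concretely, $|T|$ can be of order $|V(G)|^{k}$, so even writing down the domain of $\relB$ (let alone any $R_e^{\relB}$ with $e\in E_C$) is not FPT-computable with respect to the parameter~$k$; hence your function $r$ does not meet the FPT-computability requirement in the definition of a counting slice reduction. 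On a more conceptual level, your reduction solves the $\clique$ instance \emph{while constructing} $\relB$, before the oracle is ever queried; the oracle call is doing no real work. Replacing $T$ by all of $V(G)^k$ does not help, since the domain is then still of size $|V(G)|^{k}$.

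For contrast, the paper argues indirectly and never inspects the clique structure of $G$. It first passes, via two easy parsimonious slice reductions, from $\calG$ to primal graphs and then to the subgraphs obtained by deleting every edge both of whose endpoints lie in $S$. After these deletions, the $S$-vertices that previously sat together in a single $S$-component become pairwise independent there, so unbounded \emph{strict} $S$-star size of $\calG$ turns into unbounded (ordinary) $S$-star size of the derived class. Hardness then follows by invoking the known Durand--Mengel reduction from $\sclique[\mathbb{N}]$ for classes of unbounded $S$-star size.
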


\def\proofstrictSS{
We start with some easy observations. 

\begin{observation}\label{obs:primal}
 Let $\calG$ be a class of $S$-hypergraphs of bounded arity. Define $\calG' \{ (\calH', S) \mid (\calH, S)\in \calG, \text{ $\calH'$ is the primal graph of $\calH$}\}$. Then there is a parsimonious slice reduction from $\sCQh[\calG']$ to $\sCQh[\calG]$.
\end{observation}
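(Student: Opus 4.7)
The plan is to exhibit a parsimonious slice reduction whose relation $U$ consists of all pairs $((\calH', S), (\calH, S))$ of $S$-hypergraphs such that $\calH'$ is the primal graph of $\calH$; this $U$ is computably enumerable (deciding whether one hypergraph is the primal graph of another is trivial), and the coverage property follows immediately from the very definition of $\calG'$, since every element of $\calG'$ arises as the primal graph of some $(\calH, S) \in \calG$.

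For the reduction function $r$, I plan to use the following construction. Given $((\calH', S), (\calH, S)) \in U$ and an instance $y = ((\relA', S), \relB')$ of $\sCQh$ whose structure $\relA'$ has hypergraph exactly $\calH'$, I produce an instance $((\relA, S), \relB)$ whose structure has hypergraph exactly $\calH$. The key observation is that $\calH'$ being the primal graph of $\calH$ means that every edge $\{u, v\}$ of $\calH'$, and hence every pair of distinct variables appearing in an atom of $\relA'$, is contained in some hyperedge of $\calH$. So I can pick, for every atom $P(u, v)$ of $\relA'$, an arbitrary hyperedge $e_{P, u, v} \in E(\calH)$ with $\{u, v\} \subseteq e_{P, u, v}$ (say the lexicographically first one). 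I then introduce a fresh vocabulary with one relation symbol $Q_e$ of arity $|e|$ per hyperedge $e \in E(\calH)$, fix for each $e$ an enumeration $(v_{e, 1}, \ldots, v_{e, |e|})$ of its vertices, and set $Q_e^\relA := \{(v_{e, 1}, \ldots, v_{e, |e|})\}$ on domain $V(\calH) = V(\calH')$; this makes the hypergraph of $\relA$ equal to $\calH$. For $\relB$ on domain $B'$, I let $Q_e^\relB$ consist of all tuples $(b_1, \ldots, b_{|e|})$ such that for every atom $P(u, v)$ of $\relA'$ with $e_{P, u, v} = e$ it holds that $(b_{\pi_e(u)}, b_{\pi_e(v)}) \in P^{\relB'}$, where $\pi_e(v)$ denotes the position of $v$ in the enumeration of $e$.

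The correctness check is then routine: an assignment $h : V(\calH) \to B'$ satisfies the single atom $Q_e(v_{e, 1}, \ldots, v_{e, |e|})$ of $\relA$ in $\relB$ if and only if it satisfies every binary atom of $\relA'$ assigned to $e$, so intersecting over all $e$ we see that $h \in \hom(\relA, \relB)$ iff $h \in \hom(\relA', \relB')$; restricting to $S$ yields $|\hom(\relA, \relB, S)| = |\hom(\relA', \relB', S)|$, which is exactly the parsimonious-slice correctness condition $Q(t, y) = Q'(t', r(t, t', y))$. The construction is polynomial time in $\|y\|$ once the hypergraphs $(\calH, S)$ and $(\calH', S)$ are fixed, so $r$ is FPT-computable with respect to $(\pi_1, \pi_2)$. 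There is no serious obstacle to overcome: the entire content of the reduction is the remark that the primal-graph guarantee lets us ``fold'' every binary atom of $\relA'$ into some wider hyperedge of $\calH$, and then transport the constraint to that hyperedge by precomputing an intersection on the $\relB$ side.
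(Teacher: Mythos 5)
Your proof is correct and takes essentially the same route as the paper's (the paper's proof is a two-sentence sketch: bounded arity lets you simulate the binary relations on the edges of the primal graph by the relations for $\calH$, each hyperedge simulating at most $c^2$ binary constraints). Your writeup is simply the detailed version of that sketch: fold each binary atom into a covering hyperedge of $\calH$, one relation per hyperedge, transporting the constraint by an intersection on the $\relB$ side.
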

\begin{proof}
Since the arity of the $S$-hypergraphs in  $\calG$ is bounded by a constant $c$, we can simulate the binary relations on the edges of any primal graph $\calH'$ by the relations for $\calH$. Note that every relation for $\calH$ only has to simulate at most $c^2$ binary relations.
\end{proof}

\begin{observation}\label{obs:subgraph}
 Let $\calG$ be a class of $S$-hypergraphs of bounded arity. Let $\calG'$ the closure of $\calG$ under edge deletions. Then there is a parsimonious slice reduction from $\sCQh[\calG']$ to $\sCQh[\calG]$.
\end{observation}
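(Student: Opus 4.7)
The plan is to construct a parsimonious slice reduction $(U, r)$ between the two case problems. For the computably enumerable relation $U$, include a pair $((\calH', S), (\calH, S))$ whenever $V(\calH') = V(\calH)$ and $E(\calH') \subseteq E(\calH)$, i.e.\ whenever $(\calH', S)$ is obtained from $(\calH, S)$ by deleting hyperedges; this condition on $S$-hypergraphs is plainly decidable, so $U$ is certainly computably enumerable. Coverage then follows immediately from $\calG'$ being the closure of $\calG$ under edge deletions: every element $(\calH', S) \in \calG'$ arises as an edge-deletion of some $(\calH, S) \in \calG$, witnessing membership of such a pair in $U$ with second component in $\calG$.

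For correctness, I would fix any pair $((\calH', S), (\calH, S)) \in U$ together with an $\sCQh$-instance $((\relA', S), \relB')$ whose query hypergraph equals $\calH'$, and define $r$ to output an instance $((\relA, S), \relB)$ obtained by padding $\relA'$ and $\relB'$ with one new atom per missing edge. Concretely, for each extra edge $e \in E(\calH) \setminus E(\calH')$ I introduce a \emph{fresh} relation symbol $R_e$ of arity $|e|$; I set $R_e^\relA$ to be the singleton containing the tuple listing the vertices of $e$ in a fixed order, and set $R_e^\relB$ to be the full Cartesian power $B'^{|e|}$ (every tuple over the domain of $\relB'$). The variable support of this new tuple in $\relA$ is exactly $e$, and because the symbol is fresh no existing atom is touched, so the hypergraph of the resulting query $(\relA, S)$ is exactly $\calH \in \calG$; in particular $((\relA, S), \relB)$ is a legal instance of $\sCQh[\calG]$.

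Since each added relation in $\relB$ contains every tuple of the appropriate arity, the new atoms impose no constraint: the obvious restriction/extension gives a bijection between $\hom(\relA, \relB, S)$ and $\hom(\relA', \relB', S)$, which supplies the required equality $\sCQh((\calH', S), y) = \sCQh((\calH, S), r(\cdot))$. As the arity of $\calG$ is bounded by some constant $c$, each new relation $R_e^\relB$ has at most $|B'|^c$ tuples, so $r$ runs in polynomial time in the input and is in particular FPT-computable with respect to $(\pi_1, \pi_2)$. The only point that needs care, and which freshness of the new symbols guarantees, is that padding introduces no spurious edges in the hypergraph of $\relA$ beyond $E(\calH) \setminus E(\calH')$ and does not interfere with the original relations; no deeper obstacle is expected.
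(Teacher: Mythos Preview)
Your proposal is correct and follows essentially the same approach as the paper: for each missing edge $e$, add a relation whose interpretation on the database side is the full Cartesian power $B'^{|e|}$, which is polynomial-size thanks to the arity bound and imposes no constraint. The paper's proof is a one-sentence sketch of exactly this idea; your version simply spells out the relation $U$, coverage, and the freshness of the new symbols more carefully.
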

\begin{proof}
For every edge $e$ not appearing in a subhypergraph $\calH'$ but in $\calH$, let the respective relation on the right-hand-side contain all $k$-tuples of domain elements, where $k$ is the arity of $e$. Since $k$ is bounded by a constant, this can be done in polynomial time.
\end{proof}

The following lemma is an easy translation of a result from \cite{DurandM14} into our framework.

\begin{lemma}[\cite{DurandM14}]\label{lem:starsize}
 Let $\calG$ be a class of $S$-hypergraphs of unbounded $S$-star size. Then there is a counting slice reduction from $\sclique[\mathbb{N}]$ to $\sCQh[\calG]$.
\end{lemma}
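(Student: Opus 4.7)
The plan is to dichotomise on the $S$-star size of $\calG$. Since the strict $S$-star size always dominates the $S$-star size, one of two cases applies: either the $S$-star size is itself unbounded on $\calG$, or there is a constant $k_0$ bounding the $S$-star size while the strict $S$-star size remains unbounded.

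In the first case, Lemma~\ref{lem:starsize} directly gives a counting slice reduction from $\sclique[\mathbb{N}]$ to $\sCQh[\calG]$. Combined with the trivial counting slice reduction from $\clique[\mathbb{N}]$ to $\sclique[\mathbb{N}]$ (decide by a single oracle call and test whether the returned count is positive), Theorem~\ref{thm:transitivity} concludes this case.

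The second case is the real work. I apply Observations~\ref{obs:primal} and~\ref{obs:subgraph} in sequence so that, by transitivity, it suffices to construct a counting slice reduction from $\clique[\mathbb{N}]$ to $\sCQh[\calG'']$, where $\calG''$ is the closure under edge deletion of the class $\calG'$ of primal graphs of members of $\calG$. Passing to the primal graph preserves every connected component of $V \setminus S$ (two vertices outside $S$ share a hyperedge of $\calH$ iff they are adjacent in the primal graph), and therefore preserves every $S$-component $V_C$ and every set $V_C \cap S$; moreover, independent sets of the primal graph restricted to $V_C \cap S$ are a fortiori independent in the hypergraph $\calH[V_C \cap S]$, so $\calG'$ still has $S$-star size at most $k_0$ and unbounded strict $S$-star size. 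Given $k$, I pick $(\calH, S) \in \calG'$ with some $S$-component satisfying $|V_C \cap S| \geq R(k, k_0 + 1)$; Ramsey's theorem, applied to the induced subgraph on $V_C \cap S$ (whose independence number is at most $k_0$), yields a $k$-clique $\{u_1, \dots, u_k\} \subseteq V_C \cap S$, and deleting every edge of $\calH$ other than the $\binom{k}{2}$ clique edges produces $(\calH_k, S) \in \calG''$.

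Given a $\clique$ instance $(k, G)$ with $V(G) \neq \emptyset$, the function $r$ of the slice reduction builds the structure $\relB_k$ on domain $V(G)$ with symmetric binary edge relation $E(G)$, queries the oracle at $((\calH_k, S), \relB_k)$, and returns $1$ iff the oracle value is positive. Since every vertex of $\calH_k$ other than $u_1, \dots, u_k$ is isolated, the oracle's value equals $k! \cdot |\{k\text{-cliques of } G\}| \cdot |V(G)|^{|S|-k}$, which is positive precisely when $G$ has a $k$-clique. The c.e.\ language $U$ is defined to contain every pair $(k, T)$ such that some element of $T$ is an $S$-hypergraph whose edges form a $k$-clique on $k$ free variables with all remaining vertices isolated; coverage is witnessed by the Ramsey construction above. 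The principal difficulty is this Ramsey step together with the careful bookkeeping of $S$-components through the primal-graph pass; everything else is a routine composition of the tools established earlier in the section.
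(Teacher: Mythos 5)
Your proposal does not prove Lemma~\ref{lem:starsize}; it proves Lemma~\ref{lem:strictSS}. The tell is immediate: in your very first case you invoke ``Lemma~\ref{lem:starsize} directly gives a counting slice reduction from $\sclique[\mathbb{N}]$ to $\sCQh[\calG]$'' --- that is, you treat the statement you were asked to prove as a known black box. The rest of your argument (the dichotomy on $S$-star size versus strict $S$-star size, the pass to primal graphs and edge-deletion closure via Observations~\ref{obs:primal} and~\ref{obs:subgraph}, the Ramsey extraction of a $k$-clique on free variables when the independence number inside an $S$-component is capped at $k_0$) is the shape of a proof of Lemma~\ref{lem:strictSS}, whose hypothesis is unbounded \emph{strict} $S$-star size and whose conclusion is a reduction from $\clique[\mathbb{N}]$, not $\sclique[\mathbb{N}]$. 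Your final step confirms this mismatch: you ``return $1$ iff the oracle value is positive,'' which discards the count and outputs a decision bit. That cannot be a correct reduction from the counting problem $\sclique$.

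For the record, the paper gives no argument for Lemma~\ref{lem:starsize} beyond attribution: it remarks only that it ``is an easy translation of a result from \cite{DurandM14} into our framework.'' A correct proof would need to take the Durand--Mengel hardness construction (which exhibits, in an $S$-hypergraph of large $S$-star size, a large independent set inside an $S$-component whose vertices all see a common quantified vertex, and uses this ``star'' to encode $k$-clique counting) and package it as a counting slice reduction --- in particular, the reduction must preserve the count, typically via an inclusion--exclusion or interpolation argument, rather than merely testing nonemptiness. None of that appears in your proposal. As a separate observation: for Lemma~\ref{lem:strictSS}, the paper takes a shorter route than your Ramsey argument (delete all $S$--$S$ edges from the primal graph, which forces $V_C\cap S$ to be independent and hence makes the $S$-star size equal the strict $S$-star size, then apply Lemma~\ref{lem:starsize}), but that is beside the point here: you need to address Lemma~\ref{lem:starsize} itself.
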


\begin{proof}[Proof of Lemma~\ref{lem:strictSS}]
 For every $S$-hypergraph $(\calH, S)$ we compute an $S$-graph as follows: Take the primal graph of $\calH$ and delete all edges between vertices in $S$. Let $\calG'$ be the resulting class of $S$-graphs. Obviously, $\calG'$  has unbounded $S$-star size and thus there is a counting slice reduction from $\sclique[\mathbb{N}$ to $\sCQ[\calG']$. Using Observation \ref{obs:subgraph} and Observation \ref{obs:primal} then gives the desired result.
\end{proof}
}
\iftoggle{ARXIV}{\proofstrictSS}{}

\subsection{The main hardness results}

In this section we use the results of the last sections to prove the hardness results of Theorem~\ref{thm:trichotomy}.

The proof of Theorem~\ref{thm:bounded} in \cite{DurandM14} directly yields the following result.

\begin{lemma}\label{lem:DMreduction}
 Let $\calG$ be a class of $S$-hypergraphs of bounded arity. If the treewidth of $\calG$ is unbounded, then there is a counting slice reduction from to $\pn{Clique}[\mathbb{N}]$ to $\sCQh[\calG]$.
\end{lemma}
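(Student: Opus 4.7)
The plan is to extract the treewidth-based half of the hardness argument underlying Theorem~\ref{thm:bounded} (proved in \cite{DurandM14}) and to rephrase it inside the case complexity framework of Section~\ref{sct:casecomplexity}. The Durand--Mengel proof in fact establishes two essentially orthogonal reductions into $\sCQh[\calG]$: one from $\pclique$ that uses only unboundedness of the treewidths of the hypergraphs in $\calG$, and one from $\psclique$ that uses unboundedness of the $S$-star size (this second one is what we already repackaged as Lemma~\ref{lem:strictSS}). The present lemma requires exactly the first of these reductions.

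Concretely, I would produce the data of a counting slice reduction as follows. For the computably enumerable language $U \subseteq \mathbb{N} \times \powfin(\Sigma^*)$ underlying the reduction, I declare $(k,\{(\calH,S)\}) \in U$ whenever $(\calH,S) \in \calG$ has treewidth at least some threshold $g(k)$ chosen large enough to support an encoding of $k$-Clique. Because $\calG$ has bounded arity and unbounded treewidth and treewidth is computable, for each $k$ such an $(\calH,S)$ can be found by effective enumeration; this gives the coverage property with $U$ computably enumerable. For the FPT-computable function $r$: on input a $\clique$ instance $(k,G)$ with $|V(G)|=n$ together with the selected $(\calH,S)$, produce a single oracle instance of the form $((\calH,S),((\relA,S),\relB))$, where $\relA$ is the canonical query structure whose hypergraph is $\calH$ and $\relB$ is built from $G$ exactly as in \cite{DurandM14}: the grid-minor-like structure hidden inside $\calH$ (guaranteed by its large treewidth together with the bounded arity of $\calG$) is used to route the $k$ clique vertices, and the relations of $\relB$ are arranged so that satisfying assignments to $S$ biject with $k$-cliques of $G$. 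Since $\clique$ is a decision problem, $r$ simply queries the oracle on this instance and returns $1$ iff the returned count is nonzero; the total running time is $f(k,(\calH,S)) \cdot \mathrm{poly}(n)$ for a computable $f$, which is FPT in $(\pi_1,\pi_2)$ as required.

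The main technical point to verify is that the Durand--Mengel gadget construction for $\relB$ stays within the hyperedge pattern dictated by $(\calH,S)$, so that the oracle call is genuinely made on the query slice associated with the chosen $(\calH,S) \in \calG$ rather than on a query with additional or altered hyperedges. This is where the bounded-arity assumption becomes essential: the relations used in the gadgets have arities bounded by the arity bound of $\calG$, and their variable tuples can be aligned with existing hyperedges of $\calH$. The extraction of a sufficiently large grid-like substructure from a large-treewidth bounded-arity hypergraph, in the Grohe--Marx style also used in \cite{Grohe07}, is standard and provides the embedding site inside $\calH$ once its treewidth exceeds $g(k)$, so no additional combinatorial obstacle arises once the slicing bookkeeping is handled correctly.
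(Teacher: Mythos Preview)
Your approach is exactly the paper's: the paper does not give a proof of this lemma at all but simply states that it is directly yielded by the proof of Theorem~\ref{thm:bounded} in \cite{DurandM14}, and you have correctly identified the treewidth half of that argument and spelled out how it slots into the case complexity framework.

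Two small technical corrections are worth noting. First, your definition of $U$ quantifies over $(\calH,S)\in\calG$ and relies on ``effective enumeration'' of $\calG$, but the lemma places no computability hypothesis on $\calG$; indeed, the whole point of the case complexity framework (Section~\ref{sct:casecomplexity}) is to prove slice reductions independently of such assumptions. The fix is painless: declare $(k,\{(\calH,S)\})\in U$ for \emph{every} $S$-hypergraph of arity within the fixed bound and treewidth at least $g(k)$, regardless of membership in $\calG$. This $U$ is computable, coverage still holds because $\calG$ contains hypergraphs of arbitrarily large treewidth, and correctness holds because the Grohe/Durand--Mengel gadget construction works for any bounded-arity hypergraph of sufficiently large treewidth, not only for those in $\calG$. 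Second, the claim that satisfying assignments to $S$ \emph{biject} with $k$-cliques overstates what the construction gives: in general the reduction only ensures that the count is nonzero if and only if a $k$-clique exists. Since $\clique$ is a decision problem and your algorithm $r$ only tests whether the oracle returns a nonzero value, this weaker guarantee is all you need, and your argument goes through unchanged once the wording is adjusted.
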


Combining Lemma~\ref{lem:DMreduction} and Corollary~\ref{cor:graphstostructures} yields the following corollary.
\begin{corollary}\label{cor:decision}
 Let $\calC$ be a class of queries such that the treewidth of the cores of the queries in $\calC$ is unbounded. Then there is a counting slice reduction from $\clique[\mathbb{N}]$ to $\sCQ[\calC]$.
\end{corollary}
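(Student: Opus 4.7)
The plan is to deduce the statement directly by composing the two cited results via the transitivity of counting slice reducibility (Theorem~\ref{thm:transitivity}). Let $\calG$ be the class of $S$-hypergraphs associated with the cores of the queries in $\calC$, that is, $\calG := \{(\calH, S) \mid (\calH, S) \text{ is the $S$-hypergraph of the core of some } (\relA, S) \in \calC\}$. Since the treewidth of a query (and of a structure) is by definition the treewidth of its underlying hypergraph, the assumption that the cores of the queries in $\calC$ have unbounded treewidth translates immediately into the statement that the hypergraphs in $\calG$ have unbounded treewidth.

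With this setup, I would first invoke Lemma~\ref{lem:DMreduction}: since $\calG$ has bounded arity (inherited from $\calC$, which is a standing assumption throughout the paper) and unbounded treewidth, there is a counting slice reduction from $\clique[\N]$ to $\sCQh[\calG]$. Next I would invoke Corollary~\ref{cor:graphstostructures}, whose hypotheses are exactly met by our choice of $\calG$ as the class of $S$-hypergraphs of the cores of $\calC$, to obtain a counting slice reduction from $\sCQh[\calG]$ to $\sCQ[\calC]$. Composing these two reductions via Theorem~\ref{thm:transitivity} yields the desired counting slice reduction from $\clique[\N]$ to $\sCQ[\calC]$.

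There is essentially no obstacle here since both the heavy machinery (reducing $\clique$ into $\sCQh$ on high-treewidth $S$-hypergraphs, and the hypergraph-to-structure lifting through the augmented core construction combined with Lemma~\ref{lem:constants}) has already been carried out in the preceding subsections. The only point worth mentioning explicitly is that the unboundedness hypothesis on \emph{cores} of $\calC$ is precisely what feeds into the definition of $\calG$ used by Corollary~\ref{cor:graphstostructures}; no separate argument is needed to transfer the unbounded treewidth from the original queries to the $S$-hypergraphs of $\calG$, because $\calG$ is built from cores rather than from arbitrary representatives of each query.
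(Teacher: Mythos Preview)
Your proposal is correct and follows exactly the same approach as the paper, which simply states that the corollary follows by combining Lemma~\ref{lem:DMreduction} and Corollary~\ref{cor:graphstostructures}. Your write-up merely spells out the intermediate definition of $\calG$ and the appeal to transitivity (Theorem~\ref{thm:transitivity}) that the paper leaves implicit.
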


\begin{lemma}\label{lem:contracthard}
 Let $\calG$ be a class of hypergraphs such that $\contract(\calG)$ is of unbounded treewidth. Then there is a counting slice reduction from $\sclique[\mathbb{N}]$ to $\sCQh[\calG]$.
\end{lemma}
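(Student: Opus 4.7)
The plan is to split the argument on whether the strict $S$-star size of $\calG$ is bounded, and to handle each case by a tailored reduction.

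\textbf{Case A: strict $S$-star size unbounded.}
I will remove, from each $(\calH,S) \in \calG$, every hyperedge of $\calH$ contained entirely in $S$; call the resulting class $\calG^{\text{sub}}$. Hyperedges inside $S$ cannot belong to any $E_C$, so the $S$-components and the sets $V_C \cap S$ are preserved, while $\calH'[V_C \cap S]$ becomes edgeless. The $S$-star size of $(\calH',S) \in \calG^{\text{sub}}$ therefore equals $|V_C \cap S|$, i.e.\ the strict $S$-star size of $(\calH,S)$, which is unbounded by assumption. Lemma~\ref{lem:starsize} then yields $\sclique[\mathbb{N}] \leq \sCQh[\calG^{\text{sub}}]$, and composing with Observation~\ref{obs:subgraph} (which fills the removed relations with the full tuple space in the constructed instance) gives $\sclique[\mathbb{N}] \leq \sCQh[\calG]$.

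\textbf{Case B: strict $S$-star size bounded by a constant $c$.}
Then $\contract(\calG)$ is a class of bounded arity (the clique edges are binary and the surviving restrictions of original edges have arity at most that of $\calG$) with unbounded treewidth. Treating $\contract(\calG)$ as quantifier-free queries, Theorem~\ref{thm:dalmauJ} supplies a counting slice reduction $\sclique[\mathbb{N}] \leq \sCQh[\contract(\calG)]$. I then construct a parsimonious slice reduction from $\sCQh[\contract(\calG)]$ to $\sCQh[\calG]$ via a gadget: for each $S$-component $V_C$ of $(\calH,S)$ I first compute the product relation $R^*_C \subseteq (B^{\text{cont}})^{V_C \cap S}$ that records the conjunction of all $\contract$-edge constraints inside $V_C \cap S$ (polynomial-sized since $|V_C \cap S| \leq c$). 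Then for each tuple $t \in R^*_C$ I add to the new database $\relB$ a fresh disjoint copy of $V_C \setminus S$ together with, for each hyperedge $e \in E_C$ of $\calH$, a tuple obtained by applying $t$ to the $V_C \cap S$-vertices of $e$ and the fresh copy to the $V_C \setminus S$-vertices; hyperedges of $\calH$ entirely contained in $S$ are handled by copying the corresponding relations of $\relB^{\text{cont}}$ directly.

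Correctness of the Case B reduction rests on two facts: distinct connected components of $\calH[V - S]$ are vertex-disjoint, so gadgets for distinct $S$-components live in disjoint parts of $\relB$ and cannot interact; and within a single $V_C$, the connectivity of $\calH[V_C]$ through its $V \setminus S$-vertices forces any homomorphism extending $h|_{V_C \cap S} = t$ to map $V_C \setminus S$ entirely into a single gadget copy---the one indexed by $t$---so such an extension exists iff $t \in R^*_C$. Hence $|\hom(\relA,\relB,S)| = |\hom(\relA^{\text{cont}},\relB^{\text{cont}},S)|$, and composition with Dalmau--Jonsson completes Case B. The principal obstacle will be this gadget correctness, specifically ruling out unintended ``cross-gadget'' homomorphisms inside a single $V_C$; I expect to enforce this by assigning each hyperedge of $\calH$ its own relation symbol in $\relA$ (and, if needed, adding unary typing in the spirit of Lemma~\ref{lem:constants}).
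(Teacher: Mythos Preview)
Your proposal is correct and follows essentially the same route as the paper: a case split on whether the strict $S$-star size of $\calG$ is bounded, with Case~A handled via Lemma~\ref{lem:strictSS} (the paper cites it directly rather than re-deriving it) and Case~B by composing the Dalmau--Jonsson reduction into $\sCQh[\contract(\calG)]$ with a parsimonious slice reduction $\sCQh[\contract(\calG)] \to \sCQh[\calG]$. Your Case~B gadget (one fresh copy of $V_C\setminus S$ per satisfying tuple on $V_C\cap S$) is only a cosmetic variant of the paper's encoding (each quantified vertex in $C$ ranges directly over the set $D_C$ of such tuples, with edge relations enforcing equality and projection consistency), and the correctness argument via connectivity of the $S$-component through $V\setminus S$ is the same in both.
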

\begin{proof}
 Assume first that $\calG$ is of unbounded strict $S$-star size. Then $\sCQh[\calG]$ is $\sW{1}$-hard by Lemma~\ref{lem:strictSS}. So we assume in the remainder of the proof that there is a constant $c$ such that for every $(\calH,S)$ in $\calG$ every $S$-component of $(\calH,S)$ contains only $c$ vertices from~$S$.
 
 From Theorem~\ref{thm:dalmauJ} it follows that there is a counting slice reduction from $\sclique[\mathbb{N}]$ to $\sCQh[\contract(\calG)]$. Therefore, it suffices to show parsimonious slice reduction $(U,r)$ from $\sCQh[\contract(\calG)]$ to $\sCQh[\calG]$ to show the lemma.
 
 The relation $U$ is defined as $U:=\{(\contract(\calH,S), (\calH,S))\mid (\calH,S) \in \calG\}$. By definition, this satisfies the covering condition.
 
For the definition of $r$, consider an instance $((\relA, S), \relB)$ of $\sCQh[\contract(\calG)]$ and let $\calH$ be the hypergraph of $\relA$. Moreover, let $(\calH',S)\in \calG$ be an $S$-hypergraph such that $(\calH,S) = \contract(\calH',S)$. W.l.o.g.~assume that for every edge $e$ of $\calH$, the structure $\relA$ contains one relation $R_e$ containing only a single tuple $\vec{e}$ where $\vec{e}$ contains the elements of $e$ in an arbitrary order. We construct an instance $r((\calH,S), (\calH',S), ((\relA, S), \relB)) = ((\relA', S), \relB')$. Similarly to $\relA$, the structure $\relA'$ has for every edge $e$ in $\calH'$ a relation $R_e$ that contains only a single tuple $\vec{e}$ with the properties as before. For every $S$-component $C$ of $\calH$ we do the following: Let $D_C$ be the tuples encoding the homomorphisms $h$ from $\relA[V(C)\cap S]$ to $\relB$. For every $v\in V(C)\setminus S$ we let $D_C$ be the domain of $v$. Whenever two elements $u,v\in V(C)\setminus S$ appear in an edge, we set $R^{\relB'}$ in such a way that for all tuples in $R_e^\relB$ the assignments to $u$ and $v$ coincide. Moreover, whenever $u\in V(C)\cup S$ and $v\in V(C)\setminus S$ we allow only tuples in which the assignment to $u$ coincides with the assignment to $u$ that is encoded in the assignment to $v$. For all edges $e$ of $\calH'$ with $e\setminus S\ne \emptyset$, we let $R_e^{\relB'}$ contain all tuples that satisfy the two conditions above. Finally, for all edges $e$ with $e\in S$ we set $R^{\relA'} :=R^\relA$.

It is easy to verify that $\hom(\relA, \relB, S) = \hom(\relA', \relB', S)$. Thus it only remains to show that the construction can be done in polynomial time. Note first that the number of variables from $S$ in any $S$-component of $\calH'$ is bounded by $c$. Thus we can compute all domains $D_C$ in time $\|\relB\|^{O(c)}$. The rest of the construction can then be easily done in polynomial time.
\end{proof}

\begin{corollary}\label{cor:countinghard}
 Let $\calC$ be a class of conjunctive queries such that $\contract(\calC)$ is of unbounded treewidth. Then there is a counting slice reduction from $\sclique[\mathbb{N}]$ to $\sCQ[\calC]$.
\end{corollary}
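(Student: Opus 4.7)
The plan is to obtain the corollary by composing two already-established counting slice reductions and invoking transitivity (Theorem~\ref{thm:transitivity}). The key observation is that the contract operator on a query class $\calC$ is defined via the $S$-hypergraphs of its cores, so the hypothesis on $\calC$ can be transferred verbatim to an associated hypergraph class.

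More concretely, let $\calG$ denote the class of $S$-hypergraphs of the cores of the queries in $\calC$. Directly from Definition~\ref{def:contract} and the definition of $\contract(\calC)$ we have the identity $\contract(\calG) = \contract(\calC)$; in particular, $\contract(\calG)$ has unbounded treewidth by assumption. Applying Lemma~\ref{lem:contracthard} to $\calG$ then yields a counting slice reduction from $\sclique[\mathbb{N}]$ to $\sCQh[\calG]$.

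Next, Corollary~\ref{cor:graphstostructures} (which was proved by chaining the parsimonious and counting slice reductions from the previous subsections, using the construction of $\hatrelA$ and $\hatrelA^*$ and Lemma~\ref{lem:constants}) provides a counting slice reduction from $\sCQh[\calG]$ to $\sCQ[\calC]$, exactly because $\calG$ is, by definition, the class of $S$-hypergraphs of the cores of $\calC$. Combining the two reductions via Theorem~\ref{thm:transitivity} gives the desired counting slice reduction from $\sclique[\mathbb{N}]$ to $\sCQ[\calC]$.

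There is no real obstacle here: all of the technical work has been isolated in Lemma~\ref{lem:contracthard} (the hypergraph-level reduction, which in turn relied on Lemma~\ref{lem:strictSS} to handle the unbounded strict $S$-star size case and on Theorem~\ref{thm:dalmauJ} to handle the bounded case) and in the chain of reductions leading to Corollary~\ref{cor:graphstostructures}. The only subtlety worth checking explicitly in the write-up is that the bounded-arity hypothesis needed by Corollary~\ref{cor:graphstostructures} is inherited from our standing assumption on $\calC$, so both ingredients apply without modification.
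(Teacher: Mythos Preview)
Your proposal is correct and follows exactly the paper's approach: the paper's proof simply states that the corollary follows by combining Lemma~\ref{lem:contracthard} with Corollary~\ref{cor:graphstostructures}, and you have spelled out precisely this combination (together with the identification $\contract(\calG)=\contract(\calC)$ and the appeal to transitivity, Theorem~\ref{thm:transitivity}).
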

\begin{proof}
 This follows by combination of Lemma~\ref{lem:contracthard} and Corollary~\ref{cor:graphstostructures}.
\end{proof}

\subsection{Putting things together}\label{sct:puttingtogether}

We now finally show Theorem~\ref{thm:trichotomy} by putting together the results of the last sections.

\begin{proof}[Proof of Theorem~\ref{thm:trichotomy}]
1. follows directly from Corollary~\ref{cor:countingalg} with the observation that bounded treewidth of $\contract(\calC)$ implies bounded $S$-star size.

2. is Corollary~\ref{cor:decision} and Lemma~\ref{lem:decision} using the fact that $\CQ[\all]$ counting slice reduces to $\clique[\mathbb{N}]$; this follows from~\cite[Section 6.1]{FlumG06}.
 
 3. is Corollary~\ref{cor:countinghard}.
\end{proof}

\section{Conclusion}

In this paper we have proved a complete classification for the counting complexity of conjunctive queries, continuing a line of work that spans several previous papers~\cite{DalmauJ04,DurandM14,GrecoS14}. While this solves the bounded arity case completely, the most apparent open question is what happens for the unbounded arity case. This case is rather well understood for the decision version $\CQ$ of the problem~\cite{Marx13-tractablehypergraph}, but for counting not much is known. In particular, it is not known if the results of~\cite{Marx13-tractablehypergraph} can even be adapted to the quantifier free setting.

Another interesting problem would be to go from conjunctive queries to more expressive query languages. This has been done with some success for decision problems (see e.g.~\cite{Chen14} and the references therein), but the situation for counting is much less clear. It is known that counting and decision differ a lot at least in some settings, e.g.~even very simple unions of conjunctive queries yield hard counting problems~\cite{PichlerS13,thesis} while $\CQ$ for these queries is very easy. Can we get a better understanding of counting complexity in this setting and how it differs from decision?

To prove our results, we have extended the case complexity framework to counting complexity. We are very optimistic that this will be helpful when studying the research areas discussed above. Moreover, due to its generic nature, we feel that this framework should also be of use outside of the query answering context and allow transparent proofs and presentations in other areas of parameterized complexity.

\paragraph*{Acknowledgements}

Chen was supported by the Spanish project
TIN2013-46181-C2-2-R, 
by the Basque project GIU12/26,
and by the Basque grant UFI11/45.

\newpage

\bibliographystyle{plain}
\bibliography{CQ}

\newpage

\begin{appendix}

\section{Proofs of Section~\ref{sct:casecomplexity}}\label{app:case}

\begin{proof}[Proof of Theorem~\ref{thm:transitivity}]
Suppose that $(U_1, r_1)$ is a counting slice reduction from
$Q_1[S_1]$ to $Q_2[S_2]$, 
and that $(U_2, r_2)$ is a counting slice reduction from
$Q_2[S_2]$ to $Q_3[S_3]$.
We show that there exists a counting slice reduction
$(U, r)$ from $Q_1[S_1]$ to $Q_3[S_3]$.

Define $U \subseteq \Sigma^* \times \powfin(\Sigma^*)$
to be the set that contains a pair $(s_1, T_3)$
if and only if
there exists $T_2$ such that $(s_1, T_2) \in U_1$;
for each $t_2 \in T_2$, there exists
$T_{t_2}$ such that $(t_2, T_2) \in U_2$;
and,
it holds that $\bigcup_{t_2 \in T_2} T_{t_2} = T_3$.

We verify the coverage condition as follows.
For each $s_1 \in S_1$, there exists $T_2 \subseteq S_2$
such that $(s_1, T_2) \in U_1$ (since coverage holds for $U_1$),
and for each $t_2 \in T_2$,
there exists $T_{t_2}$ such that $(t_2, T_{t_2}) \in U_2$
(since coverage holds for $U_2$).
Hence, by definition of $U$, it holds that
$(s, \bigcup_{t_2 \in T_2} T_2) \in U$.

We verify the correctness condition as follows.
Let $A_1$ and $A_2$ by the algorithms given by the definition
of counting slice reduction for $r_1$ and $r_2$, respectively.
We describe an algorithm $A$ for the needed partial function $r$,
as follows; the algorithm $A$ uses the algorithms 
$A_1$ and $A_2$ in the natural fashion.
On an input $(s_1, T_3, x)$, the algorithm $A$
checks if $(s_1, T_3) \in U$; if so,
it may compute a set $T_2$ and sets $\{ T_{t_2} \}_{t_2 \in T_2}$
that witness this (in the sense of the definition of $U$).
The algorithm $A$ then invokes the algorithm $A_1$
on input $(s_1, T_2, x)$; each time $A_1$
makes an oracle query,
it is of the form $Q_2(t_2, z)$ where $t_2 \in T_2$,
and to resolve the query, the algorithm $A$ calls
$A_2$ on input $(t_2, T_{t_2}, z)$.

The time analysis of the algorithm $A$ is analogous to
that carried out in~\cite[Appendix Section A]{Chen14}.
\end{proof}

 \begin{proof}[Proof of Theorem~\ref{thm:slice-red-gives-fpt-red}]
Let $(U, r)$ be the counting slice reduction.
Since $S$ and $S'$ are both computable,
there exists a computable function $h: \Sigma^* \to \powfin(\Sigma^*)$
such that for each $s \in S$, it holds that $(s, h(s)) \in U$.
Consider the algorithm $A'$ that does the following:
given $(s, y)$, check if $s \in S$; if not, then output $0$,
else compute $r(s, h(s), y)$ using the algorithm $A$ for $r$
guaranteed by the definition of counting slice reduction.
This algorithm $A'$ and $h$ give a counting FPT-reduction
from $\param{Q[S]}$ to $\param{Q'[S']}$.
Note that, as a function of $(s,y)$,
we have that $A'(s,y)$ is FPT-computable with respect to $\pi_1$,
since $r(s,h(s),y)$ is FPT-computable with respect to $(\pi_1,
\pi_2)$,
and $h(s)$ is a function of $s$.
\end{proof}

\begin{proof}[Proof of Theorem~\ref{thm:promfpt-to-fpt}]
Let $(U, r)$ be the counting slice reduction given by hypothesis.
Since $U$ and $S$ are both computably enumerable,
there exists
 a computable function $h: \Sigma^* \to \powfin(\Sigma^*)$
such that for each $s \in \Sigma^*$, it holds that $(s, h(s)) \in U$.
Consider the following algorithm $A'$ for $K$:
on an input $(s, y)$,
 compute $r(s, h(s), y)$ using the algorithm $A$ for $r$
guaranteed by the definition of counting slice reduction;
the oracle queries that the algorithm $A$ poses to $Q$ are resolved
by the algorithm $B$ which witnesses $\paramprom{Q[S]} \in \prom{\FPT}$.
Suppose that the running time of $B$ on $(t, z)$ is bounded above by
$f(t)p(|(t,z)|)$; note that the time needed to resolve an oracle query
$(t,z)$
made by $A$ on $(s,y)$ is
$(\max_{t \in h(s)} f(t)) p(|(t,z)|)$, and that 
$\max_{t \in h(s)} f(t)$ is a computable function of $s$.
We then have that $A'(s, y)$ is FPT-computable with respect to $\pi_1$,
by an argument similar to that in the proof of
Theorem~\ref{thm:slice-red-gives-fpt-red}.
\end{proof}

\iftoggle{ARXIV}{}{
\section{Proofs of Section~\ref{sct:algorithms}}
\subsection{Proof of Lemma~\ref{lem:computecores}}\label{app:computecores}

\begin{proof}[Proof of Lemma~\ref{lem:computecores}]
\prooflemcomputecores
}

\iftoggle{ARXIV}{}{
\subsection{Proof of Lemma~\ref{lem:decision}}\label{app:decision}

\prooflemdecision
}


\iftoggle{ARXIV}{}{
\section{Proof of Lemma~\ref{lem:strictSS}}

\proofstrictSS
}

\end{appendix}

\end{document}